\pgfplotsset{compat=newest}
\theoremstyle{plain}
\newtheorem{lemma}{Lemma}
\newtheorem{prop}{Proposition}
\newtheorem{remark}{Remark}
\theoremstyle{definition}
\providecommand{\keywords}[1]{\textbf{\textit{Index Terms ---}} #1}
\begin{document}

\title{\Huge{Covert Wireless Communications under Quasi-Static Fading with Channel Uncertainty}}

\author{\large{Khurram~Shahzad, \IEEEmembership{Member, IEEE,} and Xiangyun~Zhou, \IEEEmembership{Senior Member, IEEE} }

\thanks{The authors are with the Research School of Electrical, Energy and Materials Engineering, The Australian National University, Canberra, ACT, Australia. (Emails: \{khurram.shahzad, xiangyun.zhou\}@anu.edu.au). }
}

\maketitle

\begin{abstract}
Covert communications enable a transmitter to send information reliably in the presence of an adversary, who looks to detect whether the transmission took place or not. We consider covert communications over quasi-static block fading channels, where users suffer from channel uncertainty. We investigate the adversary Willie's optimal detection performance in two extreme cases, i.e., the case of perfect channel state information (CSI) and the case of channel distribution information (CDI) only. It is shown that in the large detection error regime, Willie's detection performances of these two cases are essentially indistinguishable, which implies that the quality of CSI does not help Willie in improving his detection performance. This result enables us to study the covert transmission design without the need to factor in the exact amount of channel uncertainty at Willie. We then obtain the optimal and suboptimal closed-form solution to the covert transmission design. Our result reveals fundamental difference in the design between the case of quasi-static fading channel and the previously studied case of non-fading AWGN channel.
\end{abstract}

\keywords{\small Physical layer security, covert wireless communications, channel uncertainty, channel training, quasi-static fading channel. \normalsize}

\IEEEpeerreviewmaketitle

\section{Introduction}
\subsection{Background}
The positives of digital explosion have resulted in the negatives of security concerns, both in business and private domains. Governments and corporations are determined to ensure that their digital assets are properly protected, so that consumers can access the information and resources in confidence. The security and privacy of information transmitted over the air has always been a concern for wireless system designers, with a recently renewed interest owing to the advances and innovations in wireless technologies and their widespread use in everyday activities. Traditional techniques in security of wireless transmissions focus on maintaining the message confidentiality, looking to develop approaches such that the message content is only accessible to the intended receiver. In this regard, the standard practices in cryptography \cite{stinson2005cryptography} look to encode the message in such a way that the eavesdropper / unintended receiver is unable to decode the message, at least not without significant computation. On the other hand, physical layer security \cite{bloch_book,sean_book} exploits the imperfections and uncertainties of the physical channel, such as thermal noise and interference, to achieve security and privacy.

To augment the existing approaches to security, a new viewpoint has recently been proposed termed as Covert Communications or Low Probability of Detection Communications \cite{commag15bash,shihao2019mag}. Covert communications propose to hide the very existence of the wireless transmission itself. In contemporary social and political backdrops, situations exist where in addition to protecting the information content of the transmission, it is imperative to hide the transmission. For example, hiding communications in a sensitive or hostile environment is of paramount importance to military and law enforcement agencies. On the other hand, landing of sensitive information, e.g., pertaining to health issues or financial transactions of an individual, in the wrong hands can be exploited and is highly undesirable. In above mentioned and many other potential scenarios, covert communications offer a viable pathway which can be used in conjunction with existing security approaches to enhance user privacy. Recent research efforts in the domain of covert communications have explored different problems in this field, ranging from establishing the achievable fundamental limits to exploiting any uncertainties including channel noise and interference. The fundamental limits over additive white Gaussian noise (AWGN) channels have been derived in \cite{bash_jsac}, where the authors provide a square root law on the amount of information that can be transmitted covertly and reliably.

One of the major approaches to covert communications stems from information theory where researchers have focused on characterizing the covert capacity of different communication scenarios. Initial works extended the square root law of \cite{bash_jsac} to binary symmetric channels (BSCs) \cite{jaggi_ISIT_13}, discrete memoryless channels (DMCs) \cite{wang_TIT} and multiple access channels (MACs) \cite{bloch_ISIT_16}. Based on the principle of channel resolvability, \cite{bloch_TIT} developed a coding scheme which improves upon the size of required key shared between the transmitter and the receiver, while identifying the fundamental limits of covert communications in terms of optimal asymptotic scaling of the message and key size. \cite{tahmasbi2018first} studies the first and second order asymptotics of covert communication when measuring covertness in terms of relative entropy and in terms of variational distance between the channel output distributions, while the authors in \cite{tahmasbi2019covert} study the problem of covert and secret key generation over a state-dependent DMC in the presence of an active adversary. Apart from these works on discrete channels, covert communications under continuous-time Poisson channel and under spectral mask constraints have been considered in \cite{wang2018continuous} and \cite{zhang2019undetectable}, while \cite{wang2018spawc} considered the problem of covert communications over a continuous-time additive Gaussian noise channel, where it has been shown that under no bandwidth constraint, the covert communication capacity of the channel is positive.

Although under the square root law, the average number of bits transmitted covertly per channel use asymptotically reaches zero, it has been shown that a positive covert rate is achievable in the presence of uncertainties at the adversary. These include the situations of Willie's uncertainty in the knowledge of noise power \cite{lee_undetect,goeckel_noise_2016,biao_cc} and presence of a continuously transmitting jammer in the communication environment \cite{tamara_jammer_2017}. The case of additional friendly nodes generating artificial noise, causing confusion at Willie regarding the received signal statistics, is presented in \cite{soltani2018covert}, while achieving covertness with the aid of artificial noise transmitted by a full-duplex receiver was demonstrated in \cite{shahzad2018achieving}. Furthermore, covert communications under relay networks have been considered in \cite{hu2018covert_relay,hu2019covert} while a study on covert communications in the presence of a Poisson distributed field of interferers has been presented in \cite{he2018covert}. More recently, \cite{zheng2019multi} considered the performance of multi-antenna covert communications in random wireless networks, while the optimality of Gaussian signalling for covert communications under the asymmetry of Kulback-Leibler divergence was discussed in \cite{yan2019gaussian}.

The above-mentioned works consider covert communications under the assumption of an infinite number of channel uses. However, limited storage resources and requirements of quick updates in modern communication systems require a finite, sometimes small, number of channel uses, and hence the results in the infinite blocklength regime do not hold anymore. Covert communications under finite blocklength have also been previously considered in the literature. The authors in \cite{yan2019delay} consider achieving covertness under AWGN channels where the maximum number of allowed channel uses is constrained. Furthermore, \cite{shu2019delay} has considered achieving covertness under strict delay requirements using a full-duplex receiver, where it has been shown that in contrast to asymptotically infinite channel uses, a fixed power artificial noise transmission helps improve covert communications. The authors in \cite{shahzad2019covert} have analyzed covert communications under finite blocklength in the presence of a multi-antenna Willie, while covert communications over slow fading channels under finite blocklength has been considered in \cite{tang2018covert}, providing an upper bound on the total power satisfying a desired probability of detection by the adversary.

\subsection{Scenario, Approach and Contributions}
In this work, we consider achieving covert communications under finite blocklength where both Bob and Willie have imperfect knowledge of their respective channels from Alice. To help Bob estimate his channel, Alice transmits publicly known pilot symbols which also facilitate timing and carrier synchronization. We note here that the transmission of public pilots does not affect the scenario of covert transmissions since Alice is looking to hide its \textit{data} transmission to Bob despite Willie being aware of her presence. Intuitively, it is clear that the higher the training budget, the lower will be the channel estimation error, resulting in a higher throughput. This pilot transmission, on the other hand, also enables Willie to estimate his channel from Alice, improving his capability to detect any covert transmission. While the impact of imperfect channel knowledge on the throughput performance and schemes to alleviate the effects of these imperfections has been studied in detail in prior literature \cite{xia2005effect,yoo2006capacity,rezki2012ergodic}, the impact on Willie's detection performance in the domain of covert communications is much less explored. The use of pilot symbols causing less or more training at the legitimate receiver offers an interesting trade-off since helping the receiver obtain a better channel estimate also improves the detection ability of Willie.

Covert communications under imperfect channel knowledge has been previously considered in \cite{shahzad_vtc}, where under asymptotically infinite blocklength, the variance of channel uncertainty at the users has been incorporated in the analysis. The authors in \cite{wang2018covert} present a scheme where covertness is achieved with the help of a full-duplex relay, and users suffer from channel uncertainty. More recently, \cite{xu2019pilot} presented an analysis of channel estimation design in covert communications, where the number of training channel uses to maximize the effective signal-to-noise ratio at the covert link is optimized. While \cite{shahzad_vtc} and \cite{wang2018covert} present their analysis under infinite blocklength assumptions, additional sources of uncertainty in the form of an additional information receiver and an artificial noise transmitting relay, respectively, have been considered in these works to achieve covertness. Although similar to our considered scenario, \cite{xu2019pilot} presents the analysis under a finite blocklength, the authors consider an AWGN channel for Willie, whereas we consider quasi-static fading channels for both Bob and Willie. Furthermore, \cite{xu2019pilot} advocates the use of equal powers during the training and data transmission phases, while we first establish the best detection performance at Willie and then optimize the data transmission power to maximize the covert throughput under certain covertness requirements. While the above mentioned works specifically rely on and exploit the channel uncertainty at Willie to achieve covertness, we show that in scenarios pertinent to covert communications, where the transmit power levels are generally low, Willie's channel knowledge does not play as an important role as considered in the prior work, and hence, we are able to provide a unified approach to covert transmission design regardless of the exact amount of channel uncertainty at Willie.

\textit{The contribution of this work is two-fold.} First, we analytically derive Willie's optimal detection performance. Focusing on the large detection error regime\footnote{The large detection error regime at Willie refers to the scenarios most relevant to covert communications where Willie's detection error probability is desired to be as close to $1$ as possible. In the problem formulation, this refers to the design of communication schemes satisfying strict covertness requirements.}, which is most relevant for covert transmissions design, we show that Willie's detection performance is extremely insensitive to the accuracy of his channel knowledge. This constitutes a very useful result because it implies that the design for covert transmission does not rely on the knowledge of Willie's channel estimate, nor it takes into consideration as to how Willie obtains this channel estimate. Hence, as long as Willie is forced to stay in the large detection error regime by an appropriate transmission strategy, the accuracy of Willie's channel knowledge has almost no impact on his detection capability.

Second, in order to maximize the communication throughput under a given covertness constraint, we provide the optimal choice of the number of data symbols and data transmission power to be used by Alice. While the solution to the optimal problem at Alice requires a numerical search, we also provide a suboptimal closed-form solution, which offers a trade-off between obtaining a closed-form solution and a moderate reduction in the achievable performance. Our work reveals a fundamental difference in covert transmission design between the case of non-fading AWGN channel studied in \cite{yan2019delay} and the quasi-static fading channels. Specifically, for AWGN channels where noise variation is the determining factor, the more data symbols used by Alice, the higher the covert throughput is \cite{yan2019delay}. For quasi-static fading channels, however, the channel variation becomes much more important than noise variation, which holds true as long as there is a sufficient number of data symbols in the transmission \cite{durisi2016toward}. Under this condition, our outage-based performance analysis shows that the less the number of data symbols used by Alice, the higher the covert throughput is.

\subsection{Paper Organization}
The rest of this paper is organized as follows: Section II provides details of our communication scenario, considered channel estimation and training and the assumptions used in this paper. Section III explains the detection at Willie under perfect channel state information (CSI) and channel distribution information (CDI) only scenarios, and establishes the equivalence of these two cases for low transmit powers at Alice. In Section IV, we analyze the covertness achieved by Alice, addressing the optimal design of data transmit powers and channel uses to maximize the covert throughput under a given covertness constraint. Section V provides numerical results validating our analysis and giving further design insights. Finally, the paper is concluded in Section VI.

\section{System Model}

\begin{figure}
\centering
	\includegraphics[width=0.8\linewidth]{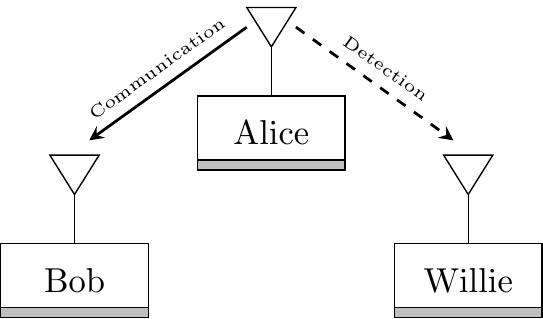}
	\caption{Covert communications model under consideration.}
	\label{system_model_tr}
\end{figure}

We consider a covert communication scenario, as shown in Fig. \ref{system_model_tr}, where the transmitter, Alice, desires to send information to the receiver, Bob, in presence of an adversary, Willie, whose job is to detect whether any transmission by Alice took place or not. Alice, Bob and Willie are assumed to have a single antenna each. The complex Gaussian noise at Bob and Willie's receivers is denoted by $n_b \sim \mathcal{CN}(0,\sigma_b^2)$ and $n_w \sim \mathcal{CN}(0,\sigma_w^2)$, respectively. We follow the common assumption that a secret is shared between Alice and Bob \cite{bash_jsac,shahzad2018achieving}, which is unknown to Willie but lets Bob know when Alice transmits a covert message. Employing random coding arguments, Alice generates codewords by independently drawing symbols from a zero-mean complex Gaussian distribution, where the codebook is known to Alice and Bob only. We define a communication slot as a block of time in which the transmission of a message from Alice to Bob is complete. When Alice transmits in a slot, she transmits the codeword corresponding to her covert message.

\subsection{Channel Model}
We consider the channels from Alice to Bob, and Alice to Willie to be quasi-static Rayleigh fading channels, where the effect of fading is modelled by a fading coefficient, $h_{k}$, and $k$ is either $b$ (Bob), or $w$ (Willie). Here, $h_{k}$ follows a circularly symmetric complex Gaussian (CSCG) distribution, with zero mean and unit variance, i.e., $h_{k} \sim \mathcal{CN}(0,1)$.  Due to the quasi-static fading assumption, the fading coefficients remain constant in one slot (i.e., one coherence interval), and change independently from one slot to the next. It is assumed here that the slot boundaries are synchronized among all parties. Due to the independent change of fading coefficients among slots, the focus is on one given slot, as the knowledge of previous slots does not help Willie in improving his detection performance \cite{shahzad_vtc,tamara_jammer_2017}. In general, the knowledge of CSI at Bob and Willie depends on the methods they use to acquire the CSI. The extreme cases include: the case of perfect CSI, where the instantaneous fading coefficients are exactly known; and the case of CDI only, where only the long-term statistics of the fading coefficients are known.

\subsection{Training-Based Transmission}
 Alice transmits publicly known pilot symbols periodically at the beginning of every slot, whereas, covert data transmission only occurs in a secretly chosen slot, which is only known to Alice and Bob. It is assumed that the transmission of a covert message is completed within a slot chosen secretly by Alice and Bob. Each slot consists of $N$ symbols, where the first $N_T$ symbols serve as pilots, and are transmitted using power $P_T$. Depending on whether or not covert data transmission happens in the current slot, data symbols or nothing is transmitted over the remaining $N_D$ symbol periods, i.e., $N = N_T + N_D$. During the training phase, the signal received by Bob for the $i^{\text{th}}$ channel use is
\begin{equation} \label{eq3}
y_T(i) = \sqrt{P_T}h_{b}x_T(i) + n_b(i)
\end{equation}
where $h_{b}$ is the channel coefficient from Alice to Bob, and $x_T(i)$ is the normalized training signal transmitted by Alice. It is assumed that Bob uses the minimum mean square error (MMSE) technique \cite{kay1993fundamentals} to estimate his channel from Alice.

\subsection{Willie's Detection and Covertness Criterion}
Since Willie is unaware of the slot in which Alice transmits data, he observes all the slots, where in each slot, he makes use of the first $N_T$ pilot symbols to learn the channel coefficient from Alice, and collects the remaining $N_D$ symbols for detection of possible data transmission. In terms of detection, Willie faces a binary hypothesis testing problem where he looks to decide whether Alice transmitted data to Bob or not. We denote Willie's null hypothesis i.e., Alice not transmitting by $\mathcal{H}_0$ while the alternate hypothesis that Alice transmitted a message to Bob is denoted by $\mathcal{H}_1$. Denote $\mathbb{P}_{FA}$ as the probability of false alarm at Willie i.e., Willie decides on $\mathcal{H}_1$ while $\mathcal{H}_0$ is true while $\mathbb{P}_{MD}$ as the probability of missed detection, i.e., Willie decides on $\mathcal{H}_0$ while $\mathcal{H}_1$ is true. It is assumed that Willie has no prior knowledge on whether Alice transmits or not. In such a case, Willie assumes both of the priors to be equally probable\footnote{The effect of assuming a non-trivial prior distribution on Alice's transmission state has been discussed in [6], while a design of un-equal priors has been provided in [16] where a full-duplex receiver of covert information has been used.} which results in Alice achieving covert communication if, for any $\epsilon > 0$, a communication scheme exists such that $\mathbb{P}_{FA} + \mathbb{P}_{MD} \geq 1 - \epsilon$ \cite{bash_jsac,lee_undetect,tamara_jammer_2017}. Here $\epsilon$ signifies the covert requirement, and a sufficiently small $\epsilon$ renders any detector employed at Willie to be ineffective. From Alice and Bob's perspective, it is imperative to force $\epsilon$ to be small, i.e., to achieve strong covertness and pushing Willie into the large detection error regime.

\section{Detection Analysis at Willie}
Willie's detection performance is largely dependent on his knowledge of the channel from Alice, $h_w$. Here, we analyze Willie's detection performance under two extreme cases, i.e., when perfect CSI knowledge is available at Willie and when only CDI is  available. These two cases provide the bounds on Willie's detection performance under the case where he looks to utilize the publicly known pilot symbols transmitted by Alice to learn the channel coefficients. Under the case of perfect CSI knowledge, we first show that Willie's optimal detector is a radiometer, and then proceed to find its optimal detection threshold and the resulting minimum detection error probability. Under the case of unknown CSI at Willie, although the optimal detector is not necessarily a radiometer in general, we will show that Willie's detection performance using a radiometer for unknown CSI converges to his performance under the optimal detector for known CSI, in the large detection error regime. This implies that radiometer is asymptotically the optimal detector in the large detection error regime, regardless of the CSI accuracy at Willie.

\subsection{Detection under Perfect CSI Knowledge}
We consider the scenario when the instantaneous channel realization is perfectly known at Willie. Here, the optimal test that minimizes the detection error at Willie is the likelihood ratio test with $\nu^* = 1$ as the threshold, which is given by
\begin{equation}\label{lrt1}
\frac{\mathbb{P}_1 \triangleq \prod_{i=1}^{N_D} f(y_w(i) | \mathcal{H}_1)}{\mathbb{P}_0 \triangleq \prod_{i=1}^{N_D} f(y_w(i) | \mathcal{H}_0) } \underset{\mathcal{H}_0}{\overset{\mathcal{H}_1}{\gtrless}} 1,
\end{equation}
where $f(y_w(i) | \mathcal{H}_0) = \mathcal{CN}(0, \sigma_w^2)$ and $f(y_w(i) | \mathcal{H}_1) = \mathcal{CN}(0,|h_w|^2 P_D + \sigma_w^2)$ are the likelihood functions of $y_w(i)$ under hypothesis $\mathcal{H}_0$ and $\mathcal{H}_1$, respectively. Here, $y_w(i)$ represents Willie's observation for the $i^{\text{th}}$ symbol duration of the potential data transmission phase, given by
\begin{equation}\label{eq9}
\begin{aligned}
y_w(i) =
\begin{cases}
n_w(i), \quad &\mathcal{H}_0 \\
\sqrt{P_D} h_w  x_D(i) + n_w(i), \quad &\mathcal{H}_1,
\end{cases}
\end{aligned}
\end{equation}
where $x_D$ represents Alice's transmit symbols, with $x_D \sim \mathcal{CN}(0,1)$, and $P_D$ is Alice's data transmit power. It is assumed that Willie is aware of the values used by Alice for $P_D$ and $N_D$ whenever she transmits any data to Bob. Using the distribution of $y_w(i)$, and through performing some algebraic manipulations on (\ref{lrt1}), we have
\begin{equation}
\frac{1}{N_D} \sum_{i=1}^{N_D}|y_w(i)|^2 \underset{\mathcal{H}_0}{\overset{\mathcal{H}_1}{\gtrless}} \lambda^*,
\end{equation}
where $\lambda^*$ is the optimal detection threshold at Willie, and the test statistic given by $T(y_w) = \frac{1}{N_D} \sum_{i=1}^{N_D}|y_w(i)|^2$ shows that the optimal detection test for Willie is to perform a threshold test on the average received power, making a radiometer the optimal detector at Willie under perfect CSI. For the detection error probabilities at Willie, the probability of False Alarm and Missed Detection events is given by
\begin{align}\label{eq10}
\mathbb{P}_{FA} &= \mathbb{P} \left[ \frac{1}{N_D}\sum_{i=1}^{N_D} |y_w(i)|^2 > \lambda | \mathcal{H}_0  \right] \notag \\
&= \mathbb{P} \left[ \chi_{2N_D}^2 > \frac{N_D \lambda}{\sigma_w^2} \right]  = 1 - \frac{\gamma\left(N_D,  \frac{N_D \lambda}{\sigma_w^2} \right)}{\Gamma(N_D)} ,
\end{align}

and

\begin{align}\label{eq11}
\mathbb{P}_{MD} &= \mathbb{P} \left[ \frac{1}{N_D}\sum_{i=1}^{N_D} |y_w(i)|^2 \leq \lambda | \mathcal{H}_1  \right] \notag \\
&=  \mathbb{P}\left[\chi_{2N_D}^2 \leq \frac{N_D \lambda}{|h_w|^2 P_D + \sigma_w^2}   \right]  \notag \\
&= \frac{\gamma\left(N_D,  \frac{N_D \lambda}{|h_w|^2 P_D + \sigma_w^2} \right)}{\Gamma(N_D)} ,
\end{align}

respectively, where $\chi^2_{2 N_D}$ represents a chi-square random variable with $2N_D$ degrees of freedom, $\Gamma(x) = (x-1)!$ is the complete Gamma function, $\gamma(\cdot , \cdot)$ represents the lower incomplete Gamma function, given by
\begin{equation}
\gamma(a,b) = \int_{0}^{b} e^{-x} x^{a-1} dx .
\end{equation}
 The detection error probability at Willie is thus given as

\begin{align}\label{eq12}
\zeta_w &= \mathbb{P}_{FA} + \mathbb{P}_{MD} \notag \\
&= 1 - \frac{\gamma\left(N_D,  \frac{N_D \lambda}{\sigma_w^2} \right)}{\Gamma(N_D)} + \frac{\gamma\left(N_D,  \frac{N_D \lambda}{|h_w|^2 P_D + \sigma_w^2} \right)}{\Gamma(N_D)} .
\end{align}
We next present the optimal choice of the detection threshold of Willie's radiometer and the resulting minimum detection error probability.

\begin{lemma}
Under the assumption of perfect CSI knowledge, the optimal detection threshold of Willie's radiometer for a given channel realization, $h_w$, is
\begin{equation}\label{eq13}
\lambda^*_{CSI} = \frac{\sigma_w^2 (|h_w|^2 P_D + \sigma_w^2)}{|h_w|^2 P_D} \ln \left( \frac{|h_w|^2 P_D + \sigma_w^2}{\sigma_w^2}  \right) ,
\end{equation}
while the resulting minimum detection error probability is given by
\begin{align}\label{eq14}
\zeta_{w, CSI}^* = 1 &- \frac{\gamma\left(N_D, N_D \left(1 + \frac{\sigma_w^2}{|h_w|^2 P_D}\right) \ln (\frac{|h_w|^2 P_D}{\sigma_w^2} + 1) \right)}{\Gamma(N_D)} \notag \\
&+ \frac{\gamma\left(N_D,  \frac{N_D \sigma_w^2 }{|h_w|^2 P_D}  \ln (\frac{|h_w|^2 P_D}{\sigma_w^2} + 1) \right)}{\Gamma(N_D)} .
\end{align}
\end{lemma}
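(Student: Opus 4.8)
The plan is to minimize the detection error probability $\zeta_w$ in~(\ref{eq12}) directly over the threshold $\lambda$ using elementary calculus. First I would differentiate $\zeta_w$ with respect to $\lambda$: since $\gamma(a,b)=\int_0^b e^{-x}x^{a-1}\,dx$, the fundamental theorem of calculus gives $\frac{\partial}{\partial\lambda}\gamma\!\left(N_D,\frac{N_D\lambda}{s}\right)=e^{-N_D\lambda/s}\left(\frac{N_D\lambda}{s}\right)^{N_D-1}\frac{N_D}{s}$ for any constant $s>0$. Applying this with $s=\sigma_w^2$ and with $s=|h_w|^2P_D+\sigma_w^2$ and setting $\frac{\partial\zeta_w}{\partial\lambda}=0$ yields, after the common factor $(N_D\lambda)^{N_D-1}N_D/\Gamma(N_D)$ cancels, the stationarity condition $\sigma_w^{-2N_D}e^{-N_D\lambda/\sigma_w^2}=(|h_w|^2P_D+\sigma_w^2)^{-N_D}e^{-N_D\lambda/(|h_w|^2P_D+\sigma_w^2)}$.

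Next, taking the natural logarithm of this condition reduces it to a linear equation in $\lambda$, namely $\frac{\lambda}{|h_w|^2P_D+\sigma_w^2}-\frac{\lambda}{\sigma_w^2}=\ln\frac{\sigma_w^2}{|h_w|^2P_D+\sigma_w^2}$. Solving for $\lambda$ and simplifying, using $(|h_w|^2P_D+\sigma_w^2)-\sigma_w^2=|h_w|^2P_D$, gives exactly the expression~(\ref{eq13}) for $\lambda^*_{CSI}$.

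I would then verify that this unique stationary point is the global minimizer: as $\lambda\to 0^+$ both lower incomplete Gamma terms in~(\ref{eq12}) vanish, so $\zeta_w\to 1$; as $\lambda\to\infty$ both terms tend to $\Gamma(N_D)$, so again $\zeta_w\to 1$. Since $\zeta_w$ is smooth on $(0,\infty)$, has only this one critical point, and is strictly below $1$ there (because $\gamma(N_D,\cdot)$ is increasing and $\frac{N_D\lambda}{|h_w|^2P_D+\sigma_w^2}<\frac{N_D\lambda}{\sigma_w^2}$), the stationary point must be a minimum. Finally, substituting $\lambda^*_{CSI}$ into~(\ref{eq12}), the argument of the first incomplete Gamma becomes $\frac{N_D\lambda^*_{CSI}}{\sigma_w^2}=N_D\!\left(1+\frac{\sigma_w^2}{|h_w|^2P_D}\right)\ln\!\left(\frac{|h_w|^2P_D}{\sigma_w^2}+1\right)$ and the argument of the second becomes $\frac{N_D\lambda^*_{CSI}}{|h_w|^2P_D+\sigma_w^2}=\frac{N_D\sigma_w^2}{|h_w|^2P_D}\ln\!\left(\frac{|h_w|^2P_D}{\sigma_w^2}+1\right)$, which yields~(\ref{eq14}).

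There is no genuinely hard step here; the computation is routine. The point requiring the most care is the global-minimality argument, i.e.\ ruling out that the single stationary point is a maximum, which is handled by the boundary behaviour $\zeta_w\to 1$ at both ends combined with $\zeta_w<1$ at the stationary point. A secondary pitfall is the bookkeeping in simplifying the stationarity condition, where one must see that the polynomial factors $(N_D\lambda)^{N_D-1}$ cancel cleanly before taking logarithms.
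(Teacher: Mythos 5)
Your proposal is correct and follows essentially the same route as the paper: differentiate $\zeta_w$ in (\ref{eq12}) with respect to $\lambda$ (the paper first rewrites it via upper incomplete Gamma functions, which is equivalent to your direct use of the fundamental theorem of calculus on $\gamma(\cdot,\cdot)$), set the derivative to zero, solve the resulting linear-in-$\lambda$ equation after taking logarithms, and substitute back. Your additional verification that the unique stationary point is the global minimizer, via the boundary behaviour $\zeta_w\to 1$ as $\lambda\to 0^+$ and $\lambda\to\infty$ together with $\zeta_w<1$ at the critical point, is a worthwhile detail that the paper's proof leaves implicit.
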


\begin{proof}
To minimize the detection error probability, Willie considers the problem:
\begin{equation}\label{eq15}
\underset{\lambda}{\mathrm{min}} \quad \zeta_w = \mathbb{P}_{FA} + \mathbb{P}_{MD}.
\end{equation}
From the definition of upper and lower incomplete Gamma functions, $\Gamma(s) = \Gamma(s,x) + \gamma(s,x)$, where $\Gamma(\cdot, \cdot)$ is the corresponding upper incomplete Gamma function. Thus, we can write
\begin{align}\label{eq16}
\zeta_w = 1 -  \frac{1}{\Gamma(N_D)} &\Bigg[ \Gamma\left(N_D,  \frac{N_D \lambda}{|h_w|^2 P_D + \sigma_w^2} \right) \notag \\
&- \Gamma\left(N_D,  \frac{N_D \lambda}{\sigma_w^2} \right) \Bigg]   .
 \end{align}
Setting $\frac{\partial \zeta_w}{\partial \lambda} = 0$ and some algebraic manipulations give the optimal value of $\lambda$, where we use derivative property of the upper incomplete Gamma function, given by:
\begin{equation}\label{eq17}
\frac{\partial \Gamma(s, f(x))}{\partial x} = - (f(x))^{s-1} e^{-f(x)} \frac{\partial f(x)}{\partial x} .
\end{equation}
Next, putting in the value of $\lambda^*_{CSI}$ into the expression for $\zeta_w$ in (\ref{eq12}) gives the desired result for $\zeta_{w, CSI}^*$.
\end{proof}

\subsection{Detection under Knowledge of CDI only}
\noindent
In this subsection, we consider the scenario where Willie does not know the channel coefficient, and only the channel distribution information is available to Willie. Under the assumption of a radiometer, the detection error probability at Willie still has the same expression as given in (\ref{eq12}). However, since Willie is unaware of his instantaneous channel realizations from Alice, the optimal detection threshold at Willie in this case is given by
\begin{equation}\label{eq19}
\lambda_{CDI}^* = \underset{\lambda}{\mathrm{ \arg \min}} \: \mathbb{E}_{|h_w|^2} \left[ \zeta_{w, CDI} \right] ,
\end{equation}
where the expectation is taken over the distribution of $h_w$.

\subsection{Performance Comparison between CSI and CDI Cases}
From Alice and Bob's perspective, achieving strong covertness implies having large detection errors at Willie which, in turn, requires Alice to transmit at very low powers. Here, we show that for these low transmit power level transmissions, as it is generally considered in covert communication scenarios, the optimal detection performance at Willie under the perfect CSI case and CDI only case are indistinguishable. To show this, we first present linear approximations of Willie's detection error probability in the asymptotically low power regime ( i.e., around $P_D \rightarrow 0$ ) under perfect CSI and CDI only cases, and then establish the equivalence of these linear approximations.

\begin{lemma}
The linear approximation of $\zeta_{w, CSI}^*$ for a given channel realization in the asymptotically low power regime, is given as
\begin{equation}\label{eq20}
 \underset{P_D \rightarrow 0}{\mathrm{\lim}} \: \zeta_{w, CSI}^* = 1 - \frac{|h_w|^2 N_D^{N_D} e^{-N_D}}{\sigma_w^2 \Gamma(N_D)} P_D + o(P_D) ,
\end{equation}
where $o(P_D)$ represents the remainder terms of the series expansion.
\end{lemma}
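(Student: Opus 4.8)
\emph{Proof outline.} The plan is to Taylor-expand the closed-form expression for $\zeta_{w,CSI}^*$ in (\ref{eq14}) about $P_D=0$ and keep only the linear term. First I would normalise by setting $x \triangleq |h_w|^2 P_D/\sigma_w^2$, so that $x\to 0$ precisely when $P_D\to 0$, and write the two arguments of the incomplete Gamma functions in (\ref{eq14}) as $a_1(x) \triangleq N_D\,(1+x^{-1})\ln(1+x)$ and $a_2(x) \triangleq N_D\,x^{-1}\ln(1+x)$. Using $\gamma(s,b)=\int_0^b e^{-t}t^{s-1}\,dt$ together with the fact that $a_1-a_2 = N_D\ln(1+x)>0$ for $x>0$, the bracketed difference in (\ref{eq14}) collapses to the single integral $\gamma(N_D,a_1)-\gamma(N_D,a_2)=\int_{a_2(x)}^{a_1(x)} e^{-t}t^{N_D-1}\,dt$, so that $\zeta_{w,CSI}^* = 1 - \tfrac{1}{\Gamma(N_D)}\int_{a_2(x)}^{a_1(x)} e^{-t}t^{N_D-1}\,dt$.

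Next I would exploit that the interval of integration shrinks to the single point $N_D$ as $x\to0$. Expanding $\ln(1+x)=x-\tfrac{x^2}{2}+\cdots$ gives $a_1(x)=N_D(1+\tfrac{x}{2})+O(x^2)$ and $a_2(x)=N_D(1-\tfrac{x}{2})+O(x^2)$, hence the interval has length $a_1-a_2 = N_D\ln(1+x) = N_D x + O(x^2)$ and midpoint $m\triangleq\tfrac12(a_1+a_2) = N_D + O(x^2)$. Since $g(t)\triangleq e^{-t}t^{N_D-1}$ is smooth in a neighbourhood of $N_D$, a first-order (midpoint) estimate of the integral gives $\int_{a_2}^{a_1} g(t)\,dt = (a_1-a_2)\,g(m) + O(x^3) = N_D x\, g(N_D) + O(x^2) = e^{-N_D} N_D^{N_D}\,x + O(x^2)$. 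Dividing by $\Gamma(N_D)$, substituting back $x=|h_w|^2P_D/\sigma_w^2$, and using $O(x^2)=o(P_D)$ yields $\zeta_{w,CSI}^* = 1 - \frac{|h_w|^2 N_D^{N_D} e^{-N_D}}{\sigma_w^2\,\Gamma(N_D)}\,P_D + o(P_D)$, which is the claimed expansion.

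The one delicate point — and the step I expect to be the main obstacle — is the error bookkeeping in the last display: one must confirm that the $O(x^2)$ corrections to the endpoints, the $O(x^2)$ displacement of $m$ from $N_D$, and the curvature of $g$ over the shrinking interval all feed only into the $o(P_D)$ remainder and do not perturb the linear coefficient. This works out because the $O(x)$ pieces of $a_1$ and $a_2$ are equal and opposite, so they cancel in $a_1+a_2$ and $m=N_D+O(x^2)$, whence $g(m)=g(N_D)+O(x^2)$; combined with an interval half-width of order $x$ (so the midpoint-rule remainder is $O(x^3)$), the total error stays $O(x^2)=o(P_D)$. An alternative route that sidesteps the integral representation is to treat $F(P_D)\triangleq\zeta_{w,CSI}^*$ directly, note $F(0)=1$ since $a_1=a_2=N_D$ at $P_D=0$ so the bracket in (\ref{eq14}) vanishes, and compute $F'(0)$ via the derivative rule (\ref{eq17}) together with $\left.\tfrac{d a_1}{dx}\right|_{x=0}=N_D/2$ and $\left.\tfrac{d a_2}{dx}\right|_{x=0}=-N_D/2$; this recovers the same slope.
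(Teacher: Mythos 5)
Your proof is correct, and it takes a genuinely different route from the paper's own argument in Appendix~A. The paper keeps the two incomplete-Gamma terms of (\ref{eq14}) separate, converts them to upper incomplete Gamma functions, differentiates each with the identity (\ref{eq17}), and then evaluates the limit of every resulting factor individually via L'Hopital's rule and Euler's limit $e^x=\lim_{n\to\infty}(1+x/n)^n$; the two terms each contribute exactly half of the slope $-|h_w|^2N_D^{N_D}e^{-N_D}/(\sigma_w^2\Gamma(N_D))$. You instead collapse the difference $\gamma(N_D,a_1)-\gamma(N_D,a_2)$ into the single integral $\int_{a_2}^{a_1}e^{-t}t^{N_D-1}\,dt$ over an interval that shrinks to the point $N_D$ with length $N_D x+O(x^2)$, and read off the linear coefficient as $(a_1-a_2)\times g(N_D)=N_D^{N_D}e^{-N_D}x$. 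This is shorter, makes the value of the slope transparent (it is the integrand evaluated at $N_D$ times the rate at which the integration window opens), and---a small technical bonus---it yields the expansion $1+cP_D+o(P_D)$ directly, whereas the paper's computation of $\lim_{P_D\to0}\partial\zeta_{w,CSI}^*/\partial P_D$ implicitly relies on continuity of the derivative near $P_D=0$ to convert that limit into a first-order Taylor coefficient. Your error bookkeeping is sound, and in fact slightly more careful than necessary: since $a_1-a_2=O(x)$, even a one-sided endpoint rule with $g(a_2)=g(N_D)+O(x)$ keeps the remainder at $O(x^2)=o(P_D)$, so the exact cancellation of the $O(x)$ terms in the midpoint is not essential. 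Your sketched alternative via $a_1'(0)=N_D/2$ and $a_2'(0)=-N_D/2$ is essentially the paper's argument rewritten in the normalized variable $x=|h_w|^2P_D/\sigma_w^2$.
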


\begin{proof}
See Appendix A.
\end{proof}

We next present a linear approximation for $\zeta_{w, CDI}^*$, which is Willie's optimal detection error probability under the case where only CDI is available to Willie.
\begin{lemma}
The linear approximation of $\zeta_{w, CDI}^*$ for a given channel realization, in the asymptotically low power regime, is given as
\begin{equation}\label{eq35}
 \underset{P_D \rightarrow 0}{\mathrm{\lim}} \: \zeta_{w, CDI}^* = 1 - \frac{|h_w|^2 N_D^{N_D} e^{-N_D}}{\sigma_w^2 \Gamma(N_D)} P_D + o(P_D) ,
\end{equation}
where $o(P_D)$ represents the remainder terms of the series expansion.
\end{lemma}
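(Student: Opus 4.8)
The plan is to note that, under the radiometer, $\zeta^{*}_{w,CDI}$ is simply the error $\zeta_{w}$ of (\ref{eq12}) evaluated at the threshold $\lambda^{*}_{CDI}$ defined in (\ref{eq19}), i.e. $\zeta^{*}_{w,CDI}=\zeta_{w}(\lambda,P_{D})\big|_{\lambda=\lambda^{*}_{CDI}}$, and then to expand this about $P_{D}=0$. Two structural observations make the argument short. First, putting $P_{D}=0$ in (\ref{eq12}) makes the two lower-incomplete-Gamma terms identical, so $\zeta_{w}(\lambda,0)=1$ for \emph{every} $\lambda$; this gives the zeroth-order term. Second, differentiating (\ref{eq12}) in $\lambda$ via the incomplete-Gamma derivative rule (the analogue of (\ref{eq17}) for $\gamma(\cdot,\cdot)$) and evaluating at $(\lambda,P_{D})=(\sigma_{w}^{2},0)$, the two resulting terms again coincide, so $\partial\zeta_{w}/\partial\lambda$ \emph{vanishes} there; moreover it is $O(P_{D})$ for $\lambda$ in a neighbourhood of $\sigma_{w}^{2}$. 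Hence the first-order-in-$P_{D}$ term of $\zeta_{w}(\lambda^{*}_{CDI},P_{D})$ does not depend on the precise threshold, as long as $\lambda^{*}_{CDI}\to\sigma_{w}^{2}$ as $P_{D}\to0$; it is determined solely by $\partial\zeta_{w}/\partial P_{D}$ at $(\sigma_{w}^{2},0)$.

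The substantive step — and the main obstacle — is therefore to establish that $\lambda^{*}_{CDI}\to\sigma_{w}^{2}$. I would do this by expanding $\zeta_{w}(\lambda,P_{D})$ in $P_{D}$ for \emph{fixed} $\lambda$, which is the same series computation that underlies Appendix~A but carried out before substituting any particular threshold; it yields $\zeta_{w}(\lambda,P_{D})=1-H(\lambda;|h_{w}|^{2})\,P_{D}+o(P_{D})$ with $H(\lambda;t)=\dfrac{t\,N_{D}^{N_{D}}\lambda^{N_{D}}e^{-N_{D}\lambda/\sigma_{w}^{2}}}{\sigma_{w}^{2N_{D}+2}\,\Gamma(N_{D})}$. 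Taking $\mathbb{E}_{|h_{w}|^{2}}$ and using $\mathbb{E}[|h_{w}|^{2}]=1$ (as $h_{w}\sim\mathcal{CN}(0,1)$), we get $\mathbb{E}_{|h_{w}|^{2}}[\zeta_{w}(\lambda,P_{D})]=1-\dfrac{N_{D}^{N_{D}}\lambda^{N_{D}}e^{-N_{D}\lambda/\sigma_{w}^{2}}}{\sigma_{w}^{2N_{D}+2}\,\Gamma(N_{D})}\,P_{D}+o(P_{D})$, so that minimizing the averaged error over $\lambda$ amounts, to leading order, to \emph{maximizing} $\lambda^{N_{D}}e^{-N_{D}\lambda/\sigma_{w}^{2}}$, whose unique maximizer is $\lambda=\sigma_{w}^{2}$. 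This forces $\lambda^{*}_{CDI}\to\sigma_{w}^{2}$; making it fully rigorous requires only a uniformity statement for the $o(P_{D})$ remainder in $\lambda$ (equivalently, a direct analysis of the first-order optimality condition $\partial_{\lambda}\mathbb{E}_{|h_{w}|^{2}}[\zeta_{w}]=0$ after dividing by $P_{D}$ and letting $P_{D}\to0$, which yields the condition $\lambda/\sigma_{w}^{2}-1=0$), together with a routine argument that $\lambda^{*}_{CDI}$ stays in a compact subset of $(0,\infty)$ uniformly in small $P_{D}$.

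Given $\lambda^{*}_{CDI}\to\sigma_{w}^{2}$, I finish by a bivariate Taylor expansion of $\zeta_{w}$ about $(\sigma_{w}^{2},0)$: $\zeta^{*}_{w,CDI}=\zeta_{w}(\sigma_{w}^{2},0)+\partial_{\lambda}\zeta_{w}\big|_{\xi}\,(\lambda^{*}_{CDI}-\sigma_{w}^{2})+\partial_{P_{D}}\zeta_{w}\big|_{\xi}\,P_{D}$ for some $\xi$ on the segment from $(\sigma_{w}^{2},0)$ to $(\lambda^{*}_{CDI},P_{D})$. The first term equals $1$; the second is $o(P_{D})$ because $\partial_{\lambda}\zeta_{w}\big|_{\xi}=O(P_{D})$ while $\lambda^{*}_{CDI}-\sigma_{w}^{2}=o(1)$; and the third, using $u(P_{D})=N_{D}\lambda/(|h_{w}|^{2}P_{D}+\sigma_{w}^{2})$ with $u(0)=N_{D}$ and $u'(0)=-N_{D}|h_{w}|^{2}/\sigma_{w}^{2}$ at $\lambda=\sigma_{w}^{2}$, evaluates to $\partial_{P_{D}}\zeta_{w}\big|_{(\sigma_{w}^{2},0)}=\tfrac{1}{\Gamma(N_{D})}N_{D}^{N_{D}-1}e^{-N_{D}}\big(-N_{D}|h_{w}|^{2}/\sigma_{w}^{2}\big)=-\,|h_{w}|^{2}N_{D}^{N_{D}}e^{-N_{D}}/(\sigma_{w}^{2}\Gamma(N_{D}))$. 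Collecting the three contributions gives precisely (\ref{eq35}); in particular the linear coefficient is the same as in Lemma~2, which is the equivalence the subsection is after. Equivalently, one may skip the Taylor remainder and simply note $H(\sigma_{w}^{2};|h_{w}|^{2})=|h_{w}|^{2}N_{D}^{N_{D}}e^{-N_{D}}/(\sigma_{w}^{2}\Gamma(N_{D}))$ in the fixed-$\lambda$ expansion of the second paragraph, evaluated along $\lambda=\lambda^{*}_{CDI}\to\sigma_{w}^{2}$, since that expansion is stationary in $\lambda$ at $\sigma_{w}^{2}$.
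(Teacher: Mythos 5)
Your proposal is correct and follows essentially the same route as the paper's Appendix B: expand $\zeta_{w,CDI}$ to first order in $P_D$ for fixed $\lambda$, take the expectation using $\mathbb{E}[|h_w|^2]=1$, minimize the linear coefficient over $\lambda$ to obtain $\lambda^*_{CDI}=\sigma_w^2$, and substitute back to recover the coefficient $|h_w|^2 N_D^{N_D}e^{-N_D}/(\sigma_w^2\Gamma(N_D))$. The only difference is that you explicitly justify the steps the paper leaves implicit --- namely that $\lambda^*_{CDI}\to\sigma_w^2$, that the expansion is stationary in $\lambda$ there, and that the $o(P_D)$ remainder is uniform --- which tightens but does not change the argument.
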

\begin{proof}
See Appendix B.
\end{proof}

\begin{prop}
For a given channel realization, the linear approximation of Willie's detection error probability under perfect CSI, $\zeta_{w, CSI}^*$, and under CDI only, $\zeta_{w, CDI}^*$, are equivalent in the asymptotically low power regime.
\end{prop}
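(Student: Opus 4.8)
The plan is to derive the proposition directly from the two preceding lemmas, since it amounts to the observation that two already-computed first-order expansions coincide. First I would place the two linear approximations side by side: Lemma~2 gives, for a fixed realization $h_w$,
\begin{equation*}
\zeta_{w, CSI}^* = 1 - \frac{|h_w|^2 N_D^{N_D} e^{-N_D}}{\sigma_w^2 \Gamma(N_D)} P_D + o(P_D), \quad P_D \rightarrow 0,
\end{equation*}
and Lemma~3 gives, for the same realization,
\begin{equation*}
\zeta_{w, CDI}^* = 1 - \frac{|h_w|^2 N_D^{N_D} e^{-N_D}}{\sigma_w^2 \Gamma(N_D)} P_D + o(P_D), \quad P_D \rightarrow 0.
\end{equation*}
I would then observe that the constant term (equal to $1$) and the coefficient of $P_D$ agree identically, so the degree-one Taylor polynomials of $\zeta_{w,CSI}^*$ and $\zeta_{w,CDI}^*$ in $P_D$ about $P_D=0$ are the same function of $P_D$; this is precisely the asserted equivalence. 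For completeness I would add one sentence noting that the equivalence holds pointwise in $h_w$, and that since the common slope $\frac{|h_w|^2 N_D^{N_D} e^{-N_D}}{\sigma_w^2 \Gamma(N_D)}$ is strictly positive for $h_w \neq 0$, averaging over $|h_w|^2$ preserves the equivalence of the (expected) detection-error expansions as well; hence, in the large detection error regime, the accuracy of Willie's channel knowledge has no leading-order effect.

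Since Lemmas~2 and 3 may be assumed, there is no genuine obstacle remaining in the proposition itself — it is a matter of inspection. The real difficulty lies in the two lemmas (deferred to Appendices~A and B): in Lemma~2 one must expand the difference of incomplete Gamma functions in (\ref{eq14}), noting that the arguments of both $\gamma(N_D,\cdot)$ terms converge to the \emph{same} value $N_D$ as $P_D\rightarrow 0$ (the $\ln(1+|h_w|^2 P_D/\sigma_w^2)$ factor cancels the $1/P_D$ factor in the limit), so one Taylor-expands the smooth map $b\mapsto \gamma(N_D,b)/\Gamma(N_D)$ about $b=N_D$ and collects the $O(P_D)$ corrections coming from each argument, which combine with opposite signs to produce the stated slope; in Lemma~3 one must additionally first locate $\lambda_{CDI}^*$ by expanding and minimizing $\mathbb{E}_{|h_w|^2}[\zeta_{w,CDI}]$ in (\ref{eq19}) before substituting back into (\ref{eq12}). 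The point of the proposition is that these two independent computations yield the identical leading-order behavior, and once the lemmas are in hand this is verified simply by comparing coefficients.
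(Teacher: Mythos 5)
Your proposal is correct and follows exactly the paper's own argument: the proposition is proved by direct comparison of the first-order expansions established in Lemmas~2 and~3, which is precisely what the paper does (its proof reads simply that the results of the two lemmas provide the desired equivalence). Your additional remarks on pointwise agreement in $h_w$ and on where the real work lies (in the appendices) are accurate but not needed beyond what the paper states.
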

\begin{proof}
Results of Lemma 2 and Lemma 3 provide the desired equivalence.
\end{proof}
For a further insight into the results given above, a couple of remarks are in order.

\begin{remark}
From Proposition $1$, Willie's optimal (minimum) detection error probabilities under the cases of perfect CSI and CDI only are asymptotically indistinguishable in the large detection error regime. The numerical validation of this equivalence is provided in Fig. 2. This equivalence implies that the accuracy of CSI at Willie does not change his detection performance that much as long as Willie's detection error probability is forced to be close to $1$. From Alice and Bob's perspective, they are unaware of the CSI's accuracy at Willie and want to ensure large detection errors. Therefore, we use $\zeta_{w, CSI}^*$ as the detection error probability at Willie under training. Although this constitutes a worst case scenario from the perspective of covert communication design, it does yield a more robust, yet accurate, approach.
\end{remark}

\begin{remark}
We note that the equivalence result obtained in Proposition 1 is based on a radiometer as the detector. It has been shown earlier that under the case of perfect CSI, radiometer is indeed the optimal detector. However, it is not clear that whether it is also the case under CDI only. Proposition 1 tells that, in the large detection error regime, Willie's detection with CDI using radiometer (which may not be optimal in general) already gives almost the same performance as detection with perfect CSI using radiometer (which is optimal). This implies that the radiometer is asymptotically the optimal detector with any accuracy of CSI, ranging from CDI only to perfect CSI, in the large detection error regime.
\end{remark}

\section{Covertness under Channel Uncertainty}
In this section, we first describe the channel estimation at Bob. Next, we consider a system metric that affects the covert communication performance, and then find the optimal solution to the covertness problem at Alice. We allow Alice to choose different power levels for pilot and data transmission. For simplicity, the training duration is fixed to one symbol which is in agreement with previous works on training-based communications \cite{cavers1991analysis,he2013secure}. It should be noted here that a higher number of training symbols will only result in a better channel estimate at Bob, improving the covert system performance, as Willie's detection is already considered under the case of perfect CSI. In addition, the power of the pilot symbol is set to the maximum allowable transmit power, i.e., $P_T = P_{\text{max}}$, Under this setup, the problem at Alice is of finding the optimal power for data transmission and the number of symbols used for data in order to maximize the covert throughput under a given covertness constraint. We note here that from Alice and Bob's perspective, it is desirable to keep Willie in the large detection error regime for achieving strong covertness.

\subsection{Channel Estimation at Bob}
As mentioned in Section II, Bob applies the LMMSE technique to estimate the channel from Alice during the training phase. The estimation of channel coefficient and corresponding estimation error at Bob are denoted by $\hat{h}_{b}$ and $\tilde{h}_{b}$, respectively. Thus
\begin{equation} \label{eq4}
h_{b} = \hat{h}_{b} + \tilde{h}_{b} ,
\end{equation}
where $\hat{h}_{b}$ and $\tilde{h}_{b}$ follow zero mean CSCG distributions \cite{gursoy2009capacity}. Furthermore, since $y_T$ is a linear function of the channel coefficient, the linear MMSE estimation becomes the optimal MMSE estimation, and the orthogonality principle implies that $\mathbb{E}\left[|h_{b}|^2  \right] = \mathbb{E}[|\hat{h}_{b}|^2 ] + \mathbb{E}[|\tilde{h}_{b}|^2 ]$. Based on LMMSE, the estimate of $h_b$ is given by \cite{gursoy2009capacity}
\begin{equation}\label{eq5}
\hat{h}_b = \frac{\sqrt{P_T}}{\sigma_b^2 + N_T P_T}y_T x_T^{\dagger} .
\end{equation}
We define $\beta_b$ as the variance of the channel estimation error at Bob, i.e, $\beta_b = \mathbb{E}[|\tilde{h}_{b}|^2 ]$, and resultantly, $\mathbb{E}[|\hat{h}_{b}|^2 ] = 1 - \beta_b$, where \cite{vakili2006effect}
\begin{equation}\label{eq7}
\beta_b = \frac{\sigma_b^2}{\sigma_b^2+ N_T P_T}.
\end{equation}
Since Bob is aware of the slot in which Alice transmits, he performs channel estimation only in such a slot and then uses the obtained channel estimate to perform data detection in the corresponding transmission slot.

\subsection{Covert Connection Probability}
During the covert data transmission, Alice transmits at a fixed, pre-determined rate\footnote{We assume here that we have a fixed rate transmitter without a degree of freedom to change $R$. Furthermore, we would like to emphasize that making $R$ a design parameter would not change the main conclusion of this work. This can be attributed to the fact that Willie's detection performance depends on Alice's data transmit power and the number of data symbols used, but does not directly depend on the data rate.} to Bob which is denoted by $R$. Due to the random nature of quasi-static fading channels from Alice to Bob, a transmission outage occurs whenever $C \leq R$, where $C$ is the capacity of the Alice to Bob channel, and in case of a transmission outage, Bob is unable to reliably decode the information transmitted by Alice. We note that for quasi-static fading channels under finite blocklength, the channel dispersion associated with finite blocklength approximation is zero \cite{yang2013block,yang2013quasi,yang2014tit}, and channel fading becomes the dominant source of decoding error events, resulting in channel outage. Under such a consideration, the outage probability can be used to accurately describe the communication performance \cite{durisi2016toward}. Here, we derive the complement of outage probability, defined as the covert connection probability, which is the probability that Bob can reliably decode a covert message from Alice, transmitted at a fixed rate $R$. The covert connection probability, $P_{cc}$, is

\begin{equation}\label{eq47}
P_{cc} = 1 - \mathbb{P} \left[ \log_2 (1 + \gamma_b) \leq R  \right]
\end{equation}
where $\gamma_b$ denotes the signal-to-noise ratio at Bob which, under the considered channel uncertainty model, is given by \cite{he2013secure}
\begin{equation}\label{eq48}
\gamma_b = \frac{|\hat{h}_b|^2 P_D}{|\tilde{h}_b|^2 P_D + \sigma_b^2}.
\end{equation}
In the following, we present the expression for the desired covert connection probability.

\begin{lemma}
The covert connection probability for Alice to Bob transmission at a fixed rate $R$, and under channel uncertainty at Bob, is given by
\begin{equation}\label{eq49}
P_{cc} = \frac{1-\beta_b}{(1-\beta_b) + \beta_b (2^R - 1)} e^{- \frac{\sigma_b^2(2^R - 1)}{(1-\beta_b)P_D}},
\end{equation}
where $P_D$ is Alice's transmit power during data transmission and $\beta_b$ is the variance of channel estimation error at Bob, as defined in (\ref{eq7}).
\end{lemma}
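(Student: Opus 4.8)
The plan is to rewrite the covert connection probability as a tail probability of the receive SINR and then evaluate it by conditioning on the estimation error. Starting from \eqref{eq47}, since $\log_2(\cdot)$ is monotone we have
\begin{equation}
P_{cc} = \mathbb{P}\!\left[ \gamma_b > 2^R - 1 \right]
= \mathbb{P}\!\left[ \frac{|\hat{h}_b|^2 P_D}{|\tilde{h}_b|^2 P_D + \sigma_b^2} > 2^R - 1 \right].
\end{equation}
Writing $\theta \triangleq 2^R - 1$ and clearing the denominator, this is $\mathbb{P}\big[\, |\hat{h}_b|^2 > \theta\, |\tilde{h}_b|^2 + \theta \sigma_b^2 / P_D \,\big]$.

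Next I would pin down the relevant distributions. By the LMMSE construction in \eqref{eq5}, $\hat{h}_b$ is a scaled version of $y_T x_T^{\dagger}$ and hence jointly Gaussian with $h_b$; the orthogonality principle makes $\hat{h}_b$ and $\tilde{h}_b$ uncorrelated, and joint Gaussianity then upgrades this to independence. Both are zero-mean CSCG, so $|\hat{h}_b|^2$ is exponential with mean $\mathbb{E}[|\hat{h}_b|^2] = 1-\beta_b$ and, independently, $|\tilde{h}_b|^2$ is exponential with mean $\mathbb{E}[|\tilde{h}_b|^2] = \beta_b$, where $\beta_b$ is given by \eqref{eq7}.

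I would then condition on $|\tilde{h}_b|^2 = y$. Using the exponential tail of $|\hat{h}_b|^2$,
\begin{equation}
\mathbb{P}\!\left[ |\hat{h}_b|^2 > \theta y + \tfrac{\theta \sigma_b^2}{P_D} \,\middle|\, |\tilde{h}_b|^2 = y \right]
= \exp\!\left( -\frac{\theta y + \theta \sigma_b^2 / P_D}{1-\beta_b} \right),
\end{equation}
and averaging over the $\mathrm{Exp}(1/\beta_b)$ density $\tfrac{1}{\beta_b} e^{-y/\beta_b}$ of $|\tilde{h}_b|^2$ gives
\begin{equation}
P_{cc} = e^{-\frac{\theta \sigma_b^2}{(1-\beta_b)P_D}} \cdot \frac{1}{\beta_b}\int_0^{\infty} \exp\!\left( -y\Big( \tfrac{\theta}{1-\beta_b} + \tfrac{1}{\beta_b} \Big) \right) dy .
\end{equation}
The remaining integral is elementary; evaluating it and simplifying the prefactor yields $\frac{1-\beta_b}{(1-\beta_b) + \beta_b \theta}\, e^{-\frac{\theta \sigma_b^2}{(1-\beta_b)P_D}}$, which is \eqref{eq49} after substituting $\theta = 2^R - 1$.

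The only genuinely delicate point is the second step: justifying that $\hat{h}_b$ and $\tilde{h}_b$ are \emph{independent} (not merely uncorrelated) and that their squared magnitudes are exponential with the stated means. Everything after that — conditioning, the exponential tail bound, and the single scalar integral — is routine. I would lean on the standard LMMSE/Gaussian facts already invoked around \eqref{eq4}--\eqref{eq7} to dispatch this cleanly.
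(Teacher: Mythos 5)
Your proposal is correct and follows essentially the same route as the paper: rewrite $P_{cc}$ as the tail probability $\mathbb{P}[|\hat{h}_b|^2 > \theta|\tilde{h}_b|^2 + \theta\sigma_b^2/P_D]$, use the exponential distributions of $|\hat{h}_b|^2$ and $|\tilde{h}_b|^2$ (means $1-\beta_b$ and $\beta_b$), and evaluate the resulting single exponential integral. Your explicit remark that LMMSE orthogonality plus joint Gaussianity yields independence (not just uncorrelatedness) is a point the paper leaves implicit, but otherwise the two arguments coincide.
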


\begin{proof}
Putting in the expression for $\gamma_b$ into the expression of $P_{cc}$, we have
\begin{align}\label{eq50}
P_{cc} &= 1 - \mathbb{P}\left[\log_2 (1 + \frac{|\hat{h}_b|^2 P_D}{|\tilde{h}_b|^2 P_D + \sigma_b^2} ) \leq R   \right] \notag \\
&= 1 - \mathbb{P} \left[|\hat{h}_b|^2 \leq \frac{(2^R-1)(|\tilde{h}_b|^2 P_D + \sigma_b^2)}{P_D}  \right] .
\end{align}
Then using the distribution of $|\hat{h}_b|^2$ and $|\tilde{h}_b|^2$ gives
\begin{align}
P_{cc} &= \frac{1}{\beta_b} \int_{0}^{\infty} e^{- \frac{(2^R-1)(|\tilde{h}_b|^2 P_D + \sigma_b^2)}{(1-\beta_b)P_D}-\frac{|\tilde{h}_b|^2}{\beta_b}} \mathrm{d}|\tilde{h}_b|^2  \notag \\
&= \frac{1}{\beta_b} e^{ - \frac{(2^R-1) \sigma_b^2}{(1-\beta_b)P_D}} \int_{0}^{\infty} e^ {- |\tilde{h}_b|^2  \frac{ \left((1-\beta_b)P_D + \beta_b P_D (2^R-1)\right)}{\beta_b (1-\beta_b)P_D}}  \mathrm{d}|\tilde{h}_b|^2 \notag \\
&= \frac{(1-\beta_b)P_D}{(1-\beta_b)P_D + \beta_b P_D (2^R-1)} e^{ - \frac{\sigma_b^2(2^R-1)}{(1-\beta_b)P_D}},
\end{align}
which concludes the proof.
\end{proof}

\subsection{Optimal Transmit Power and Number Of Transmit Symbols}
As discussed in Remark $1$, we consider $\zeta_{w, CSI}^*$ provided in (\ref{eq14}) as the minimum detection error probability at Willie under channel uncertainty, simply denoting it by $\zeta_{w}^*$. Since Alice is unaware of her channel realization to Willie, she considers the expected value of $\zeta_{w}^*$ over all possible realizations of her channel to Willie as the detection metric. Here, Alice looks to maximize her covert throughput to Bob while ensuring that Willie's average detection error probability satisfies a given covertness constraint. Owing to delay requirements, we assume in this work that the transmitted signals are constrained by a maximum blocklength, $N_{\text{D,max}}$, thus the number of Alice's covert data symbols is limited by $N_D \leq N_{\text{D,max}}$. On the other hand, there also exists a limit on the minimum number of symbols Alice can use due to the channel coding requirements for short-packet communications \cite{durisi2016toward,maiya2012low}, and this limit is denoted by $N_{\text{D,min}}$. This requirement on the minimum number of symbols is also needed for the outage-based approach to hold \cite{durisi2016toward}. In regards to the transmit power, a maximum transmit power constraint at Alice is considered, given by $P_{\text{max}}$. As mentioned previously, Alice uses the maximum allowed transmit power, $P_{\text{max}}$, for the pilot symbol.

The design problem at Alice is to optimally choose the data transmission power and the number of data symbols for covert communication, stated as
\begin{subequations}\label{eq51}
\begin{align}
\textbf{P1} \quad \underset{P_D, N_D}{\mathrm{maximize}} \quad & N_D R P_{cc} \notag \\
\mathrm{subject\:to}\quad &\mathbb{E}_{|h_{w}|^2}\left[ \zeta_{w}^* \right] \geq 1 - \epsilon \\
&P_D \leq P_{\text{max}}\\
&N_{\text{D,min}} \leq N_D \leq N_{\text{D,max}},
\end{align}
\end{subequations}
where $N_D R P_{cc}$ is the throughput from Alice to Bob, and the design parameters $P_D$ and $N_D$ refer to Alice's data transmission power and the number of symbols used for data transmission, respectively. Here, $\epsilon$ signifies the desired level of covertness, whereas $\zeta_{w}^*$ is as given in (\ref{eq14}), and in the statement of \textbf{P1}, $P_T = P_{\text{max}}$ is assumed. The solution to this problem is stated in the following.

\begin{lemma}
Alice's optimal transmit power for data transmission, as a function of $N_D$, is given by
\begin{equation}\label{eq53}
\begin{aligned}
P_D^* =
\begin{cases}
P_D^{\dagger}(N_D),\quad &\text{If} \quad P_D^{\dagger}(N_D) \leq P_{\text{max}}\\
P_{\text{max}}, &\text{Otherwise},
 \end{cases}
 \end{aligned}
\end{equation}
where $P_D^{\dagger}(N_D)$ is the solution to $\mathbb{E}_{|h_{w}|^2}\left[ \zeta_{w}^* \right] = 1 - \epsilon$ for a given $N_D$. The optimal number of data symbols transmitted by Alice is given by
\begin{equation}\label{eq54}
\begin{aligned}
N_D^* =
\begin{cases}
N_{\text{D,min}},\quad &\text{If} \quad N_D^{\dagger} \leq N_{\text{D,min}}\\
N_D^{\dagger}, &\text{If} \quad N_{\text{D,min}} < N_D^{\dagger} \leq N_{\text{D,max}}\\
N_{\text{D,max}}, &\text{Otherwise},
 \end{cases}
 \end{aligned}
\end{equation}
where $N_D^{\dagger}$ is the solution for $N_D$ to
\begin{equation}\label{eq55}
\underset{N_D}{\mathrm{maximize}} \quad N_D R P_{cc},
\end{equation}
and $P_{cc}$ is a function of $N_D$ in terms of $P_D$.
\end{lemma}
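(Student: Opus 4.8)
The plan is to solve \textbf{P1} in two stages: first optimize the data power $P_D$ for an arbitrary fixed $N_D$, and then reduce \textbf{P1} to a single-variable problem in $N_D$. The separation works because both the objective and the covertness constraint turn out to be monotone in $P_D$.

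First I would establish these monotonicities. Since $P_T=P_{\text{max}}$ and $N_T=1$ are fixed, $\beta_b$ in (\ref{eq7}) is a constant independent of $(P_D,N_D)$, so by Lemma~4 the covert connection probability $P_{cc}$ in (\ref{eq49}) is strictly increasing in $P_D$ (the exponent $-\sigma_b^2(2^R-1)/((1-\beta_b)P_D)$ increases to $0$), hence for fixed $N_D$ the objective $N_D R P_{cc}$ is strictly increasing in $P_D$. For the constraint, fix a realization $h_w$; from (\ref{eq12}) we have $\zeta_w(\lambda,P_D)=\mathbb{P}_{FA}(\lambda)+\mathbb{P}_{MD}(\lambda,P_D)$ with $\mathbb{P}_{FA}$ free of $P_D$ and $\mathbb{P}_{MD}=\gamma\!\left(N_D,\tfrac{N_D\lambda}{|h_w|^2P_D+\sigma_w^2}\right)/\Gamma(N_D)$ strictly decreasing in $P_D$ (its argument decreases while $\gamma(N_D,\cdot)$ is increasing). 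Thus $\zeta_w(\lambda,\cdot)$ is strictly decreasing in $P_D$ for every $\lambda$, hence so is $\zeta_w^*(P_D)=\min_\lambda\zeta_w(\lambda,P_D)=\zeta_{w,CSI}^*$ of Lemma~1, and therefore so is $\mathbb{E}_{|h_w|^2}[\zeta_w^*]$. Using $\mathbb{E}_{|h_w|^2}[\zeta_w^*]\to 1$ as $P_D\to 0$ (Lemma~2) and $\mathbb{E}_{|h_w|^2}[\zeta_w^*]\to 0$ as $P_D\to\infty$ (immediate from (\ref{eq14})), together with continuity, there is a unique $P_D^{\dagger}(N_D)>0$ with $\mathbb{E}_{|h_w|^2}[\zeta_w^*]=1-\epsilon$, and the feasible set of $P_D$ in \textbf{P1} is exactly $(0,\min\{P_{\text{max}},P_D^{\dagger}(N_D)\}]$.

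Since the objective is strictly increasing in $P_D$, its optimum over that interval is attained at the right endpoint $P_D^*(N_D)=\min\{P_{\text{max}},P_D^{\dagger}(N_D)\}$: the covertness constraint is active precisely when $P_D^{\dagger}(N_D)\le P_{\text{max}}$ and the power constraint is active otherwise, which is exactly (\ref{eq53}). Substituting $P_D=P_D^*(N_D)$ into the objective reduces \textbf{P1} to maximizing $g(N_D):=N_D R P_{cc}(N_D,P_D^*(N_D))$ over $N_{\text{D,min}}\le N_D\le N_{\text{D,max}}$. Letting $N_D^{\dagger}$ denote the maximizer of the unconstrained problem (\ref{eq55}), the three cases of (\ref{eq54}) are obtained by projecting $N_D^{\dagger}$ onto $[N_{\text{D,min}},N_{\text{D,max}}]$ (the reasoning is the same whether $N_D$ ranges over the reals or the integers).

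I expect this last step to be the main obstacle, since the projection yields the constrained optimum only if $g$ is unimodal over the relevant range, and $N_D$ enters $g$ both explicitly (the linear prefactor, and the $\beta_b$-free $P_{cc}$ when the power constraint is slack) and implicitly through $P_D^{\dagger}(N_D)$, which is defined only by the transcendental equation $\mathbb{E}_{|h_w|^2}[\zeta_w^*]=1-\epsilon$ involving incomplete Gamma functions, so that no closed form is available. I would argue unimodality by splitting cases: where the power constraint is slack, $P_D^*(N_D)=P_{\text{max}}$ and $P_{cc}$ is independent of $N_D$, so $g$ is linear and increasing; where the covertness constraint binds, I would show that $P_D^{\dagger}(N_D)$ is monotone in $N_D$ (a larger $N_D$ shrinks the variance of Willie's radiometer statistic and so forces a lower allowable power), making $g$ a product of an increasing factor and a decreasing factor with at most one interior maximizer. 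This is precisely the point at which the paper substitutes a numerical search for $N_D^{\dagger}$ in place of a fully general analytic argument, and I would present the projection as valid under the (numerically verifiable) unimodality of $g$.
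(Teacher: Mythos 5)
Your proposal is correct and follows essentially the same two-stage decomposition as the paper: use the monotonicity of $P_{cc}$ (increasing) and of $\mathbb{E}_{|h_w|^2}[\zeta_w^*]$ (decreasing) in $P_D$ to conclude the optimal power is $\min\{P_{\max},P_D^{\dagger}(N_D)\}$, then reduce to the one-dimensional problem in $N_D$ solved by search and clipping to $[N_{\text{D,min}},N_{\text{D,max}}]$. Your additional observations --- that $\zeta_{w}^*=\min_\lambda\zeta_w(\lambda,P_D)$ inherits the strict monotonicity, that $P_D^{\dagger}$ exists and is unique by the boundary limits, and that the projection in the $N_D$ step implicitly relies on unimodality of the reduced objective (which the paper sidesteps via numerical search, and which its Lemma~6 supports in the small-$\epsilon$ regime) --- only tighten the paper's argument rather than depart from it.
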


\begin{proof}
We first note that for a fixed $P_T = P_{max}$, the covert connection probability, $P_{cc}$, is an increasing function of $P_D$. On the other hand,  $\mathbb{E}_{|h_{w}|^2}\left[ \zeta_{w}^* \right]$ is a decreasing function of $P_D$, hence a given solution will satisfy the constraint at equality. From the constraint at equality and a given $N_D$, the solution for $P_D$, as indicated by $P_D^{\dagger}(N_D)$, can be obtained. This results in the one-dimensional optimization problem in (\ref{eq55}), which can be solved by performing a numerical search over all possible values of $N_D$. Incorporating the maximum and minimum value of $P_D$ and $N_D$ gives the desired result.
\end{proof}

We note that the optimal solution presented in Lemma 5 does not yield a closed form expression for $P_D^*$ and $N_D^*$. Rather, the solution relies on numerical search methods \cite{chapra2010numerical} to solve the optimization problem in (\ref{eq55}). We next present a suboptimal closed-form solution to this problem.

\subsection{Suboptimal Solution}
Based on the linear approximation in the asymptotically low power regime (small $\epsilon$ regime) developed earlier, we present here a suboptimal solution to find closed form expressions for $P_D^*$ and $N_D^*$. Using the linear approximation for $\zeta_{w}^*$, we rewrite the problem at Alice as
\begin{subequations}\label{eq56}
\begin{align}
\textbf{P1.1} \quad \underset{P_D, N_D}{\mathrm{maximize}} \quad &N_D R P_{cc} \notag \\
\mathrm{subject\:to}\quad &\mathbb{E}_{|h_{w}|^2}\left[ \zeta_{w}^* \right] \geq 1 - \epsilon \\
&P_D \leq P_{max}\\
&N_{D, min} \leq N_D \leq N_{D, max},
\end{align}
\end{subequations}
where now,
\begin{equation}\label{eq57}
\zeta_{w}^* \approx 1 - \frac{|h_w|^2 N_D^{N_D} e^{-N_D}}{\sigma_w^2 \Gamma(N_D)} P_D.
\end{equation}
The solution to this problem is presented in the following.

\begin{lemma}
In the asymptotically small $\epsilon$ regime, Alice's optimal transmit power for data transmission is given by
\begin{equation}\label{eq58}
\begin{aligned}
P_D^* =
\begin{cases}
P_D^{\ddagger},\quad &\text{If} \quad P_D^{\ddagger} \leq P_{max}\\
P_{max}, &\text{Otherwise},
\end{cases}
\end{aligned}
\end{equation}
where
\begin{equation}\label{eq59}
P_D^{\ddagger} = \frac{\epsilon \sigma_w^2 \Gamma(N_D^*)}{(N_D^*)^{N_D^*} e^{-N_D^*}},
\end{equation}
and the optimal number of data symbols transmitted by Alice is $N_D^* = N_{D, min}$.
\end{lemma}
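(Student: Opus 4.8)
The plan is to first turn the (approximate) covertness constraint into an explicit upper bound on $P_D$, reduce \textbf{P1.1} to a one‑dimensional problem in $N_D$, and then show that in the small‑$\epsilon$ regime this one‑dimensional objective is decreasing in $N_D$. Since $h_w\sim\mathcal{CN}(0,1)$ we have $\mathbb{E}_{|h_w|^2}[|h_w|^2]=1$, so taking expectations in (\ref{eq57}) gives $\mathbb{E}_{|h_w|^2}[\zeta_w^*]\approx 1-\frac{N_D^{N_D}e^{-N_D}}{\sigma_w^2\Gamma(N_D)}P_D$, and the constraint $\mathbb{E}_{|h_w|^2}[\zeta_w^*]\geq 1-\epsilon$ becomes equivalent to $P_D\leq P_D^{\ddagger}(N_D):=\frac{\epsilon\sigma_w^2\Gamma(N_D)}{N_D^{N_D}e^{-N_D}}$. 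As already used in the proof of Lemma~5, $P_{cc}$ in (\ref{eq49}) is increasing in $P_D$ for fixed $P_T=P_{max}$ (its prefactor is independent of $P_D$ and the exponent is increasing in $P_D$), so $N_D R P_{cc}$ is increasing in $P_D$; the optimum therefore takes $P_D$ as large as the two bounds allow, i.e. $P_D^*=\min\{P_D^{\ddagger}(N_D),P_{max}\}$. Because $P_D^{\ddagger}(N_D)\propto\epsilon\to 0$, for $\epsilon$ small enough $P_D^{\ddagger}(N_D)\leq P_{max}$ for every $N_D\in\{N_{D,min},\dots,N_{D,max}\}$, so $P_D^*=P_D^{\ddagger}(N_D)$; this is (\ref{eq58})–(\ref{eq59}) once $N_D^*=N_{D,min}$ is established.

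It remains to optimize over $N_D$. Substituting $P_D=P_D^{\ddagger}(N_D)$ into (\ref{eq49}) and writing $a:=\frac{\sigma_b^2(2^R-1)}{1-\beta_b}$ and $c:=\frac{1-\beta_b}{(1-\beta_b)+\beta_b(2^R-1)}$ (both independent of $N_D$, since with $N_T=1$ and $P_T=P_{max}$ the quantity $\beta_b$ in (\ref{eq7}) is a constant), the objective becomes $f(N_D)=N_D R c\,\exp\!\big(-a/P_D^{\ddagger}(N_D)\big)=N_D R c\,\exp\!\big(-\tfrac{a}{\epsilon\sigma_w^2}\,g(N_D)^{-1}\big)$, where $g(N_D):=\Gamma(N_D)/(N_D^{N_D}e^{-N_D})$. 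The key structural fact I would prove is that $g$ is strictly decreasing on the positive integers: from $\Gamma(N_D+1)=N_D\,\Gamma(N_D)$ one gets $g(N_D+1)/g(N_D)=e\,(N_D/(N_D+1))^{N_D+1}=e/(1+1/N_D)^{N_D+1}$, which is strictly less than $1$ by the elementary inequality $(1+1/N_D)^{N_D+1}>e$. Hence $P_D^{\ddagger}(N_D)$ is strictly decreasing in $N_D$, and for any $N_D>N_{D,min}$ the difference $g(N_D)^{-1}-g(N_{D,min})^{-1}$ is a fixed positive constant.

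Finally I would compare $f(N_D)$ with $f(N_{D,min})$: $\frac{f(N_D)}{f(N_{D,min})}=\frac{N_D}{N_{D,min}}\exp\!\Big(-\tfrac{a}{\epsilon\sigma_w^2}\big(g(N_D)^{-1}-g(N_{D,min})^{-1}\big)\Big)$. The exponent diverges to $-\infty$ at rate $\Theta(1/\epsilon)$ as $\epsilon\to 0$, so the exponential factor dominates the bounded ratio $N_D/N_{D,min}$; since $N_D$ ranges over the finite set $\{N_{D,min},\dots,N_{D,max}\}$ the threshold on $\epsilon$ can be chosen uniformly, so for all sufficiently small $\epsilon$ we have $f(N_D)<f(N_{D,min})$ for every $N_D>N_{D,min}$. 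This gives $N_D^*=N_{D,min}$, and substituting back into $P_D^{\ddagger}(N_D)$ yields (\ref{eq59}). The main obstacle is establishing the monotonicity of $g(N_D)$ — everything downstream is a clean asymptotic comparison — but it collapses to the one‑line inequality above; a secondary point is choosing the "small $\epsilon$" uniformly over the finite feasible range of $N_D$ (and small enough that $P_D^{\ddagger}\leq P_{max}$ throughout), which is automatic precisely because that range is finite.
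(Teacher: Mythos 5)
Your proposal is correct. The skeleton matches the paper's: take the expectation over $|h_w|^2$ using $\mathbb{E}[|h_w|^2]=1$, argue the covertness constraint is active because $P_{cc}$ is increasing in $P_D$, substitute $P_D^{\ddagger}(N_D)$, and reduce to a one-dimensional problem in $N_D$. Where you genuinely diverge is the monotonicity step. The paper treats $N_D$ as continuous, differentiates $N_D R P_{cc}$ to obtain an expression involving the digamma function $\psi$, and simply asserts that this derivative is strictly negative for $N_D \ge 1$. You instead prove the discrete monotonicity of $g(N_D)=\Gamma(N_D)/(N_D^{N_D}e^{-N_D})$ via the one-line ratio bound $g(N_D+1)/g(N_D)=e/(1+1/N_D)^{N_D+1}<1$, and then let the exponent of order $1/\epsilon$ dominate the bounded factor $N_D/N_{D,\min}$, uniformly over the finite admissible range of $N_D$. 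Your route is more elementary and, more importantly, it makes explicit where the small-$\epsilon$ hypothesis is actually used: the sign of the bracketed factor in the paper's derivative expression depends on $A\propto 1/\epsilon$ being sufficiently large (at $N_D=1$, for instance, negativity of the derivative requires roughly $A>4.7$), so the paper's unconditional claim that the derivative is negative for all $N_D\ge 1$ is really a small-$\epsilon$ statement, and your asymptotic domination argument justifies it cleanly. You are also slightly more careful about the corner case $P_D^{\ddagger}>P_{\max}$ (where the objective would instead grow linearly in $N_D$), which you correctly rule out for small $\epsilon$. What the paper's derivative computation buys in exchange is a condition that can be checked for a concrete, not merely asymptotically small, value of $\epsilon$.
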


\begin{proof}
Under the exponential distribution of $|h_w|^2$, the expectation is calculated as
\begin{equation}\label{eq60}
\mathbb{E}_{|h_{w}|^2}\left[ \zeta_{w}^* \right] = 1 - \frac{N_D^{N_D} e^{-N_D}}{\sigma_w^2 \Gamma(N)} P_D,
\end{equation}
and the covertness constraint then gives
\begin{equation}\label{eq61}
P_D \leq \frac{\epsilon \sigma_w^2 \Gamma(N_D)}{(N_D)^{N_D} e^{-N_D}}.
\end{equation}
We note that $P_{cc}$ is an increasing function of $P_D$ while the covertness constraint puts an upper bound on $P_D$, hence a given solution will satisfy the constraint at equality. This results in the optimization problem given as
\begin{equation}\label{eq62}
\underset{N_D}{\mathrm{maximize}} \quad N_D R P_{cc},
\end{equation}
where $P_{cc}$ is now a function of $N_D$. Considering the partial derivative w.r.t. $N_D$, we have

\begin{flalign}\label{eq63}
&\frac{\partial (N_D R P_{cc})}{\partial N_D}   \notag \\
= &- \frac{(1-\beta_b)R}{(1-\beta_b) + \beta_b (2^R - 1)} \cdot e^{- \frac{e^{-N_D}\left( N_D e^{N_D} \Gamma(N_D) + A N_D^{N_D} \right)}{\Gamma(N_D)}} \notag \\
&\times \left[ \frac{A N_D^{N_D+1}\left(\ln(N_D) - \psi(N_D)   \right) - e^{N_D} \Gamma(N_D) }{\Gamma(N_D)} \right] ,
\end{flalign}

which is strictly negative for $N_D \geq 1$. Here $A = \frac{\sigma_b^2 (2^R - 1)}{\sigma_w^2 (1 - \beta_b) \epsilon}$ and $\psi(x)$ is the Digamma function, which is defined as $\psi(x) = \frac{\Gamma'(x)}{\Gamma(x)}$. Thus the value of $N_D$ maximizing the throughput is the minimum allowed $N_D$, i.e., $N_{D, min}$. This concludes the proof.
\end{proof}

\section{Numerical Results and Discussions}

In this section, we present the numerical results and study the performance of the considered covert communication scenario under given covertness constraints. Unless stated otherwise, we consider a pre-determined rate for Alice to Bob transmission of $R=1$, the variance of Willie's receiver noise is set to $\sigma_w^2 = 0.05$, while the variance of Bob's receiver noise is set to $\sigma_b^2 = 0.01$. We consider a maximum power constraint of $P_{\text{max}} = 1$ at Alice, while $N_{D, \text{min}}$ and $N_{D, \text{max}}$ are set to be 50 and 100, respectively. We note that in the literature related to short packet communication, blocklengths in the range of 50-200 have been used \cite{durisi2016toward,mary2016finite,ren2019resource}, while for practical error correcting codes, blocklengths of $n=128, 256$ and 512 have been shown to perform well in the desired decoding error probability range \cite{shirvanimoghaddam2018short}.

We first provide a numerical validation for the equivalence of Willie's detection error probability under the cases of perfect CSI and CDI only in the large detection error regime, as derived in Proposition 3, and also explained in Remark 1. In Fig. \ref{fig_sim_2}, we plot these detection error probabilities at Willie against a range of Alice's data transmit power, $P_D$, for different numbers of data transmit symbols, $N_D$. We first note that as $N_D$ or $P_D$ increases, Willie's detection performance improves. More importantly, Willie's detection performances are indistinguishable between the perfect CSI case and the CDI only case in the large detection error regime, e.g., $\zeta_w^* \geq 0.9$. The detection performances of the two cases are still very close to each other even at $\zeta_w^* = 0.8$. These results validate our analysis and the conclusion that Willie's detection performance is extremely insensitive to the CSI's accuracy as long as the detection error probability is forced by Alice and Bob to be fairly close to 1.

\begin{figure}[t!]
\centering
	\includegraphics[width=\linewidth]{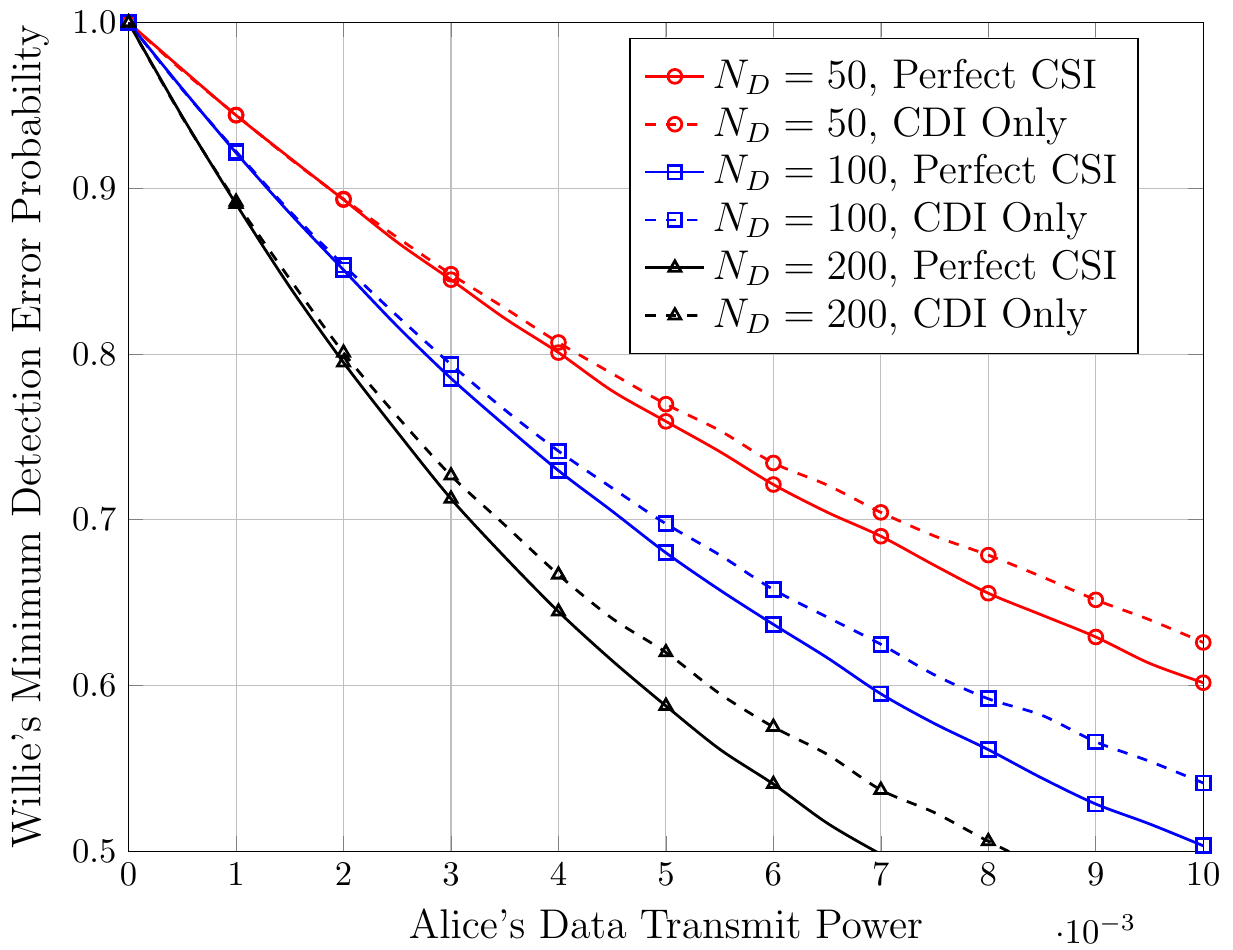}
	\caption{Willie's minimum detection error probability, $\zeta_w^*$, vs. Alice's data transmit power, $P_D$, under perfect CSI and CDI only cases for varying $N_D$.}
	\label{fig_sim_2}
\end{figure}

\begin{figure}[t!]
\centering
	\includegraphics[width=\linewidth]{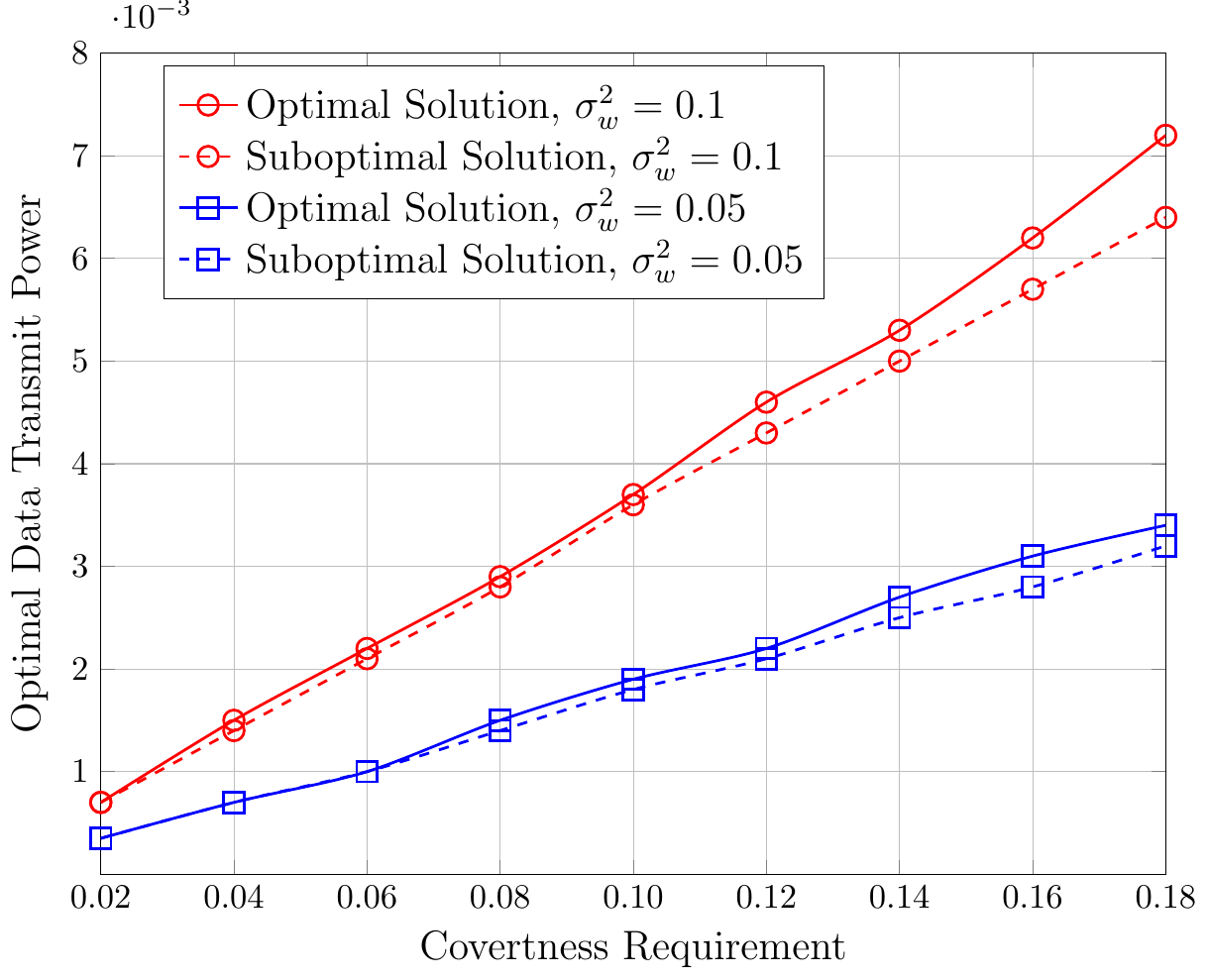}
	\caption{Comparison of the optimal data transmit power at Alice, $P_D^*$, under the optimal and suboptimal solution vs. the covertness requirement, $\epsilon$.}
\label{fig_sim_3}
\end{figure}

We next present the optimal choice of Alice's data transmit power and the optimal number of data transmit symbols under given covertness constraints in Fig. \ref{fig_sim_3} and Fig. \ref{fig_sim_4}, respectively, where we also plot the best choice for Alice's parameters under the proposed suboptimal scheme. We show these results for two different sets of noise variances at Willie for ease of exposition. Firstly, for the optimal data power values, we see that since a higher noise power causes an increased uncertainty in Willie's observations, Alice can transmit to Bob using a higher transmit power. Secondly, the proposed suboptimal scheme performs very close to the optimal one, especially in the low transmit power regime. We also note here that since the proposed suboptimal scheme is based on the linear approximation of Willie's detection performance around $P_D \rightarrow 0$, the curves for optimal transmit power deviate further from each other as the covertness requirement is relaxed, resulting in Willie no longer operating in the large detection error regime. Regarding the optimal number of data transmit symbols at Alice, both the optimal and suboptimal scheme provide the same solution, i.e., to use the minimum possible number of transmit symbols, $N_{\text{D,min}}$. We would like to emphasize here that $N_{D, \text{min}}$, which comes to be the optimal choice for $N_D$, can not be made arbitrarily small, owing to its relation to channel coding constraints and to the adopted outage-based approach.

\begin{figure}[t!]
\centering
	\includegraphics[width=\linewidth]{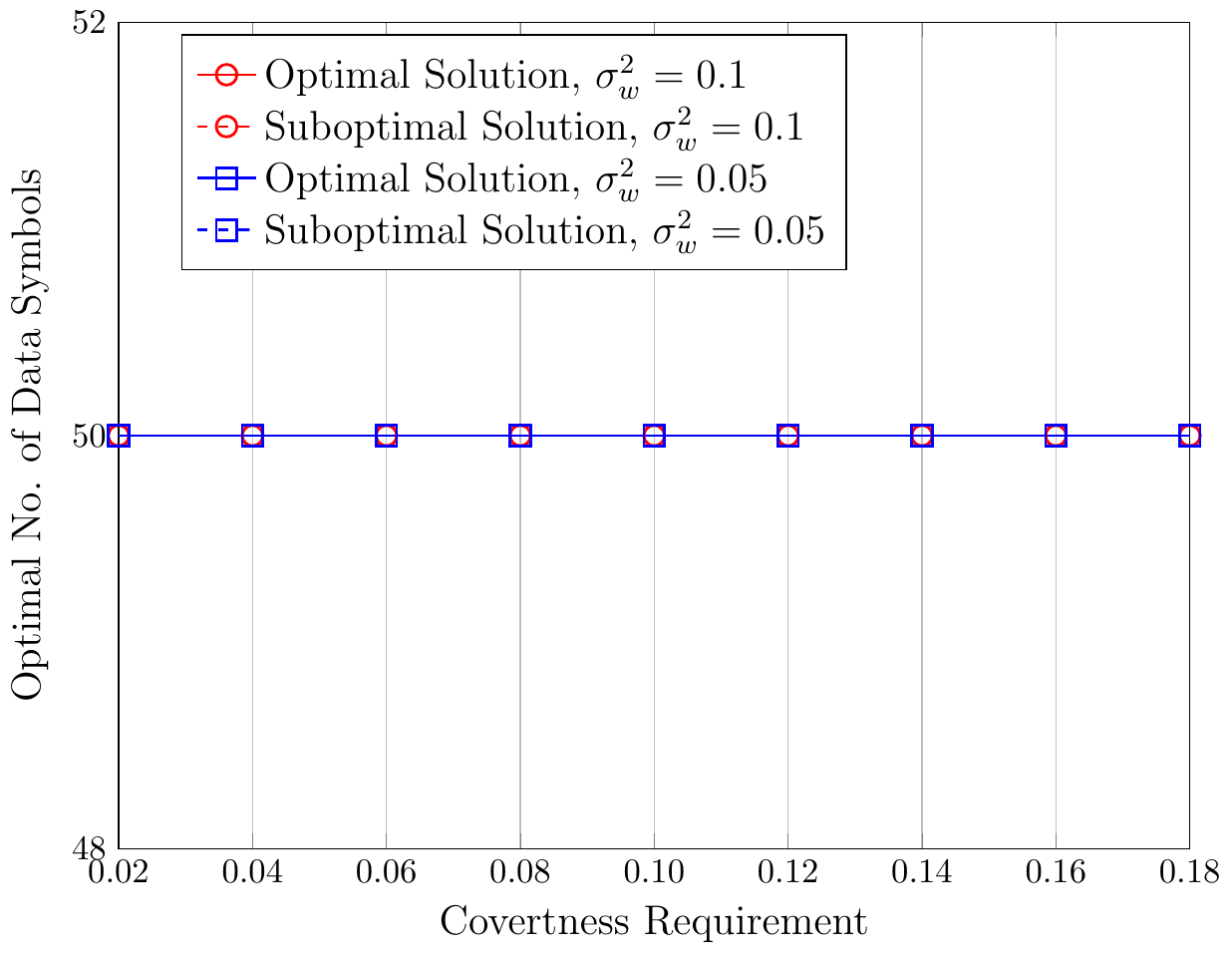}
	\caption{The optimal number of data symbols used by Alice, $N_D^*$, under the optimal and suboptimal solution vs. the covertness requirement, $\epsilon$. Note that all four curves in this figure overlap completely.}
\label{fig_sim_4}
\end{figure}

\begin{figure}[t!]
\centering
	\includegraphics[width=\linewidth]{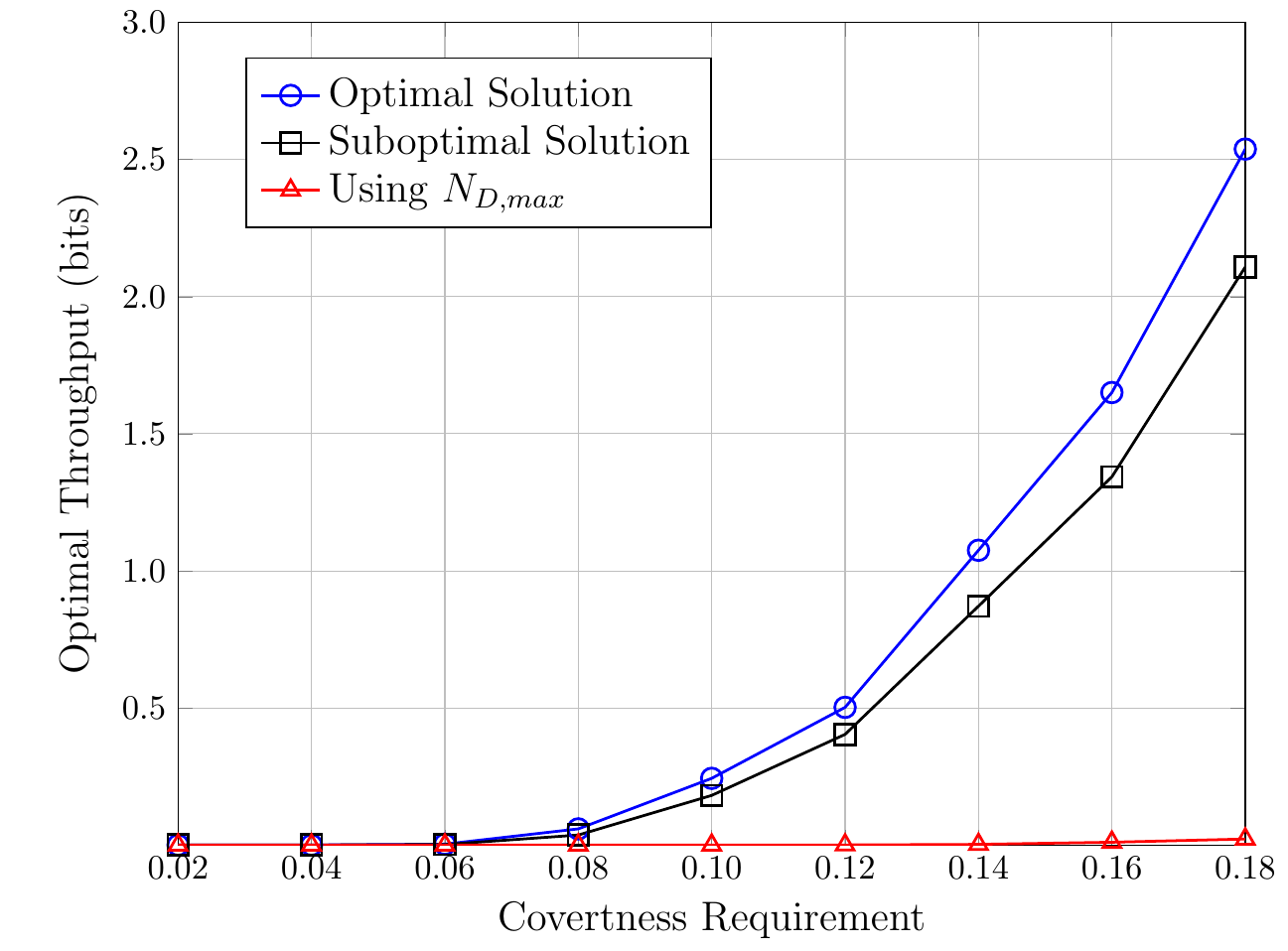}
	\caption{The optimal throughput from Alice to Bob, $N_D^* R P_{cc}$, under the optimal approach, suboptimal approach and case of using $N_{D,\text{max}}$ vs. the covertness requirement, $\epsilon$.}
	\label{fig_sim_5}
\end{figure}

It is important to highlight that the optimal (and suboptimal) solution of only using the \textit{minimum} number of transmit symbols is in sharp contrast to the previously established result for non-fading AWGN channels in \cite{yan2019delay}. Specifically, it was shown in \cite{yan2019delay} that it is optimal to use the \textit{maximum} allowable number of transmit symbols to maximize the covert throughput. This comparison demonstrates a fundamental difference in the covert transmission design between the non-fading AWGN channel and the quasi-static fading channel. Whereas in the non-fading AWGN channel case, the decoding errors are caused by the finite blocklength (i.e., the number of symbols used to transmit the message), in our case, the effect of channel fading is another reason for causing decoding errors. This effect of channel fading has been shown to be the dominant effect in causing decoding errors \cite{yang2013block,yang2013quasi,yang2014tit}, while the impact of blocklength becomes negligible under quasi-static fading. To further illustrate the importance of appropriate design, we investigate the advantage of using the optimal and suboptimal solutions over a scheme where Alice uses the maximum allowable number of symbols in a communication slot (with optimized data transmit power). Fig. \ref{fig_sim_5} shows the covert throughput achieved under the optimal and suboptimal solutions with $N_D^* = N_{\text{D,min}}$, and the covert throughput achieved by using $N_D = N_{\text{D,max}}$. The difference between the optimal and suboptimal schemes can again be attributed to the fact that Willie is no longer operating in the large detection error regime, causing a deviation in the results obtained through the optimal approach and suboptimal approach based on the linear approximation around $P_D \rightarrow 0$. We note a significant difference in the achieved throughput between the use of $N_{\text{D,min}}$ and $N_{\text{D,max}}$. Specifically, we see that the optimal (and suboptimal) solution achieves $110$-fold more throughput than that achieved by using the maximum number of data symbols. We also observe that the throughput of the suboptimal solution is roughly $20\%$ lower than that of the optimal solution, due to the small but non-negligible difference in the transmit power designs. Hence, such a moderate performance reduction is the price to pay for using the closed-form suboptimal design with minimum complexity.

\section{Conclusion}
In this paper, we have considered covert communications under the scenario where users suffer from channel uncertainty while Alice uses pilot symbols to help the intended receiver estimate their channel. We have derived the optimal detection threshold for Willie and the resulting minimum detection error probability under the extreme cases of the availability of complete CSI and CDI only at Willie. It has been shown that in the low transmit power regime, the two extreme cases are indistinguishable and hence, the quality of channel knowledge at Willie does not improve his detection performance as long as it is forced to stay in the large detection error regime. From the covert communications pair perspective, we provide the optimal choice for data transmit power and the optimal number of data transmit symbols that maximize the covert throughput. As long as there is a sufficient number of data symbols for the outage-based analysis to hold, we find that using a smaller number of data symbols achieves a higher covert throughput.

\section*{Appendix A \\ Proof of Lemma 2}
We note that for $P_D=0$, the expression of $\zeta_{w, CSI}^*$ gives a value of 1. This is expected since  in case of no transmission by Alice, Willie is unable to distinguish between the two hypotheses. This value also serves as the intercept of the linear (first order) approximation of $\zeta_{w, CSI}^*$ as a function of $P_D$. To complete the approximation, we need to find the slope of $\zeta_{w, CSI}^*$ as $P_D \rightarrow 0$, i.e, $\underset{P_D \rightarrow 0}{\mathrm{\lim}} \: \frac{\partial \zeta_{w, CSI}^*}{\partial P}$. Using the relationship of upper and lower incomplete Gamma functions given by $\Gamma(a) = \Gamma(a,b) + \gamma(a,b)$, we have

\begin{align}\label{eq21}
\zeta_{w, CSI}^* = 1 &- \frac{\Gamma\left(N_D,  \frac{N_D \sigma_w^2 }{|h_w|^2 P_D}  \ln (\frac{|h_w|^2 P_D}{\sigma_w^2} + 1) \right)}{\Gamma(N_D)} \notag \\
&+ \frac{\Gamma\left(N_D, N_D\left(1 + \frac{\sigma_w^2}{|h_w|^2 P_D}\right) \ln (\frac{|h_w|^2 P_D}{\sigma_w^2} + 1) \right)}{\Gamma(N_D)} .
\end{align}
To calculate the desired derivative, we consider the terms in $\zeta_{w, CSI}^*$ separately, where we rely on the derivative property of upper incomplete Gamma function, given by
\begin{equation}\label{eq22}
\frac{\partial \Gamma \left(s,f(x) \right)}{\partial x} = - \left( f(x) \right)^{s-1} e^{-f(x)} \frac{\partial f(x)}{\partial x}.
\end{equation}
The derivative for the second term of $\zeta_{w, CSI}^*$ in (\ref{eq21}) is calculated as

\begin{align}\label{eq23}
\frac{1}{\Gamma(N_D)}&\frac{\partial \Gamma\left(N_D,  \frac{N_D \sigma_w^2 }{|h_w|^2 P_D}  \ln (\frac{|h_w|^2 P_D}{\sigma_w^2} + 1) \right)}{\partial P_D}\notag \\
= &- \frac{1}{\Gamma(N_D)} \left[\frac{N_D \sigma_w^2 }{|h_w|^2 P_D}  \ln \left(\frac{|h_w|^2 P_D}{\sigma_w^2} + 1\right) \right]^{N_D-1}\notag \\
& \times e^{- \frac{N_D \sigma_w^2 }{|h_w|^2 P_D}  \ln \left(\frac{|h_w|^2 P_D}{\sigma_w^2} + 1\right) } \notag \\
& \times \frac{\partial}{\partial P_D}\left[ \frac{N_D \sigma_w^2 }{|h_w|^2 P_D}  \ln \left(\frac{|h_w|^2 P_D}{\sigma_w^2} + 1\right) \right] \notag \\
= &- \frac{N_D^{N_D}}{\Gamma(N_D)} \left[ \frac{\sigma_w^2 }{|h_w|^2 P_D}  \ln \left(\frac{|h_w|^2 P_D}{\sigma_w^2} + 1\right) \right]^{N_D-1} \notag \\
& \times e^{- \frac{N_D \sigma_w^2 }{|h_w|^2 P_D}  \ln \left(\frac{|h_w|^2 P_D}{\sigma_w^2} + 1\right) } \notag \\
& \times \left[\frac{\sigma_w^2}{P_D (|h_w|^2 P_D + \sigma_w^2)} - \frac{\sigma_w^2 \ln \left( \frac{|h_w|^2 P_D}{\sigma_w^2} + 1 \right)}{|h_w|^2 P_D^2} \right] .
\end{align}

Similarly, the derivative for the third term of $\zeta_{w, CSI}^*$ in (\ref{eq21}) is calculated as

\begin{align}\label{eq24}
&\frac{1}{\Gamma(N_D)} \frac{\partial \Gamma\left(N_D, N_D\left(1 + \frac{\sigma_w^2}{|h_w|^2 P_D}\right)  \ln (\frac{|h_w|^2 P_D}{\sigma_w^2} + 1) \right)}{\partial P_D} \notag \\
= &- \frac{1}{\Gamma(N_D)} \Bigg[ N_D \left(1 + \frac{\sigma_w^2}{|h_w|^2 P_D}\right) \ln \left( \frac{|h_w|^2 P_D}{\sigma_w^2} + 1 \right)  \Bigg]^{N_D-1} \notag \\
& \times e^{- N_D \left(1 + \frac{\sigma_w^2}{|h_w|^2 P_D}\right) \ln \left( \frac{|h_w|^2 P_D}{\sigma_w^2} + 1 \right) } \notag \\
& \times \frac{\partial}{\partial P_D} \left[ N_D \left(1 + \frac{\sigma_w^2}{|h_w|^2 P_D}\right) \ln \left( \frac{|h_w|^2 P_D}{\sigma_w^2} + 1 \right)  \notag \right]
\end{align}

\begin{flalign}
= &- \frac{N_D^{N_D}}{\Gamma(N_D)} \Bigg[ \left(1 + \frac{\sigma_w^2}{|h_w|^2 P_D}\right) \ln \left( \frac{|h_w|^2 P_D}{\sigma_w^2} + 1 \right)  \Bigg]^{N_D-1} \notag \\
& \times e^{- N_D \left(1 + \frac{\sigma_w^2}{|h_w|^2 P_D}\right) \ln \left( \frac{|h_w|^2 P_D}{\sigma_w^2} + 1 \right) } \notag \\
& \times \left[ \frac{|h_w|^2 P_D - \sigma_w^2 \ln \left( \frac{|h_w|^2 P_D}{\sigma_w^2} + 1 \right) }{|h_w|^2 P_D^2}  \right] .
\end{flalign}

The next step is to apply the limit as $P_D \rightarrow 0$. Thus
\begin{equation}\label{eq25}
\begin{aligned}
&\underset{P_D \rightarrow 0}{\mathrm{\lim}} \: \frac{\partial \zeta_{w, CSI}^*}{\partial P_D} \\
&= \underset{P \rightarrow 0}{\mathrm{\lim}} \: \frac{1}{\Gamma(N_D)} \Bigg[ \frac{\partial \Gamma\left(N_D,  \frac{N_D \sigma_w^2 }{|h_w|^2 P_D}  \ln (\frac{|h_w|^2 P_D}{\sigma_w^2} + 1) \right)}{\partial P_D} \\
& \quad - \frac{\partial \Gamma\left(N_D, N_D \left(1 + \frac{\sigma_w^2}{|h_w|^2 P_D}\right)  \ln (\frac{|h_w|^2 P_D}{\sigma_w^2} + 1) \right)}{\partial P_D} \Bigg] ,
\end{aligned}
\end{equation}
where, using the law of products for limits, we calculate the limit at each factor of the above derivatives separately as follows. \\

\noindent
For the first factor in (\ref{eq23}),

\begin{align}\label{eq26}
\underset{P_D \rightarrow 0}{\mathrm{\lim}} \: &\left(\frac{\sigma_w^2 }{|h_w|^2 P_D}  \ln \left(\frac{|h_w|^2 P_D}{\sigma_w^2} + 1\right) \right)^{N_D-1} \notag \\
&= \left( \underset{P_D \rightarrow 0}{\mathrm{\lim}} \: \frac{\sigma_w^2 }{|h_w|^2 P_D}  \ln \left(\frac{|h_w|^2 P_D}{\sigma_w^2} + 1\right)  \right)^{N_D-1} \notag \\
&= 1^{N_D-1} = 1
\end{align}

where we have used L'Hopital rule to find the internal limit. For the second factor in (\ref{eq23}),

\begin{align}\label{eq27}
\underset{P_D \rightarrow 0}{\mathrm{\lim}} \: &e^{- \frac{N_D \sigma_w^2 }{|h_w|^2 P_D}  \ln \left(\frac{|h_w|^2 P_D}{\sigma_w^2} + 1\right) } \notag \\
&= \underset{P_D \rightarrow 0}{\mathrm{\lim}} \: \left(\frac{|h_w|^2 P_D}{\sigma_w^2} + 1\right)^{-\frac{N_D \sigma_w^2}{|h_w|^2 P_D}} \notag \\
&= \left[ \underset{P_D \rightarrow 0}{\mathrm{\lim}} \: \left(\frac{|h_w|^2 P_D}{\sigma_w^2} + 1\right)^{-\frac{\sigma_w^2}{|h_w|^2 P_D}}  \right]^{N_D} \notag \\
&= \left[ e^{-1} \right]^{N_D} = e^{-N_D}
\end{align}

where we have used the Euler's identity \cite{gradshteyn2014table}, given by
\begin{equation}
e^x = \underset{n \rightarrow \infty}{\mathrm{\lim}} \: \left(1 + \frac{x}{n} \right)^n ,
\end{equation}
to calculate the internal limit. For the third factor in (\ref{eq23}), repeated application of L'Hopital rule yields
\begin{equation}\label{eq28}
\begin{aligned}
&\underset{P_D \rightarrow 0}{\mathrm{\lim}} \: \left[\frac{\sigma_w^2}{P_D (|h_w|^2 P_D + \sigma_w^2)} - \frac{\sigma_w^2 \ln \left( \frac{|h_w|^2 P_D}{\sigma_w^2} + 1 \right)}{|h_w|^2 P_D^2}  \right] \\
&\qquad \qquad \qquad = - \frac{|h_w|^2}{2 \sigma_w^2}.
\end{aligned}
\end{equation}
Hence, overall for the first term on RHS of (\ref{eq25}), we have

\begin{align}\label{eq29}
&\underset{P_D \rightarrow 0}{\mathrm{\lim}} \: \frac{1}{\Gamma(N_D)}\frac{\partial \Gamma\left(N_D,  \frac{N_D \sigma_w^2 }{|h_w|^2 P_D}  \ln (\frac{|h_w|^2 P_D}{\sigma_w^2} + 1) \right)}{\partial P_D} \notag \\
& \qquad \qquad \qquad = - \frac{N_D^{N_D} e^{-N_D} |h_w|^2}{2 \sigma_w^2 \Gamma(N_D)}.
\end{align}

Similarly, for the first factor in (\ref{eq24}),

\begin{align}\label{eq30}
&\underset{P_D \rightarrow 0}{\mathrm{\lim}} \: \left( \left(1 + \frac{\sigma_w^2 }{|h_w|^2 P_D} \right)  \ln \left(\frac{|h_w|^2 P_D}{\sigma_w^2} + 1\right) \right)^{N_D-1} \notag \\
&= \left( \underset{P_D \rightarrow 0}{\mathrm{\lim}} \: \left( 1 + \frac{\sigma_w^2 }{|h_w|^2 P_D} \right) \ln \left(\frac{|h_w|^2 P_D}{\sigma_w^2} + 1\right)  \right)^{N_D-1} \notag \\
&= 1^{N_D-1} = 1
\end{align}

where we have again used L'Hopital rule to find the internal limit. For the second factor in (\ref{eq24}),

\begin{align}\label{eq31}
&\underset{P_D \rightarrow 0}{\mathrm{\lim}} \: e^{- \left( 1 + \frac{N_D \sigma_w^2 }{|h_w|^2 P_D} \right)  \ln \left(\frac{|h_w|^2 P_D}{\sigma_w^2} + 1\right) } \notag \\
&= \underset{P_D \rightarrow 0}{\mathrm{\lim}} \: \left(\frac{|h_w|^2 P_D}{\sigma_w^2} + 1\right)^{- N_D \left( 1 + \frac{\sigma_w^2}{|h_w|^2 P_D} \right)} \notag \\
&= \left[ \underset{P_D \rightarrow 0}{\mathrm{\lim}} \: \left(\frac{|h_w|^2 P_D}{\sigma_w^2} + 1\right)^{- \left( 1 + \frac{\sigma_w^2}{|h_w|^2 P_D} \right) }  \right]^{N_D} \notag \\
&= \left[ e^{-1} \right]^{N_D} = e^{-N_D}
\end{align}

where we have again used the Euler's identity to calculate the internal limit. For the third factor in (\ref{eq24}),
\begin{equation}\label{eq32}
\begin{aligned}
\underset{P_D \rightarrow 0}{\mathrm{\lim}} \: \left[ \frac{|h_w|^2 P_D - \sigma_w^2 \ln \left( \frac{|h_w|^2 P_D}{\sigma_w^2} + 1 \right) }{|h_w|^2 P_D^2}   \right] = \frac{|h_w|^2}{2 \sigma_w^2}.
\end{aligned}
\end{equation}
Hence, overall for the second term on RHS of (\ref{eq25}), we have
\begin{equation}\label{eq33}
\begin{aligned}
&\underset{P_D \rightarrow 0}{\mathrm{\lim}} \: \frac{1}{\Gamma(N_D)} \frac{\partial \Gamma\left(N_D, N_D \left(1 + \frac{\sigma_w^2}{|h_w|^2 P_D}\right)  \ln (\frac{|h_w|^2 P_D}{\sigma_w^2} + 1) \right)}{\partial P_D} \\
& \qquad \qquad \qquad = \frac{N_D^{N_D} e^{-N_D} |h_w|^2}{2 \sigma_w^2 \Gamma(N_D)}.
\end{aligned}
\end{equation}
Combining the results in (\ref{eq29}) and (\ref{eq33}), we have
\begin{equation}\label{eq34}
\underset{P_D \rightarrow 0}{\mathrm{\lim}} \: \frac{\partial \zeta_{w, CSI}^*}{\partial P_D} = - \frac{N_D^{N_D} e^{-N_D} |h_w|^2}{\sigma_w^2 \Gamma(N_D)},
\end{equation}
which is the slope of the first order approximation, hence completing the proof.

\section*{Appendix B \\ Proof of Lemma 3}
The problem at Willie is of finding $\zeta_{w, CDI}^*$, given by
\begin{equation}\label{eq36}
\zeta_{w, CDI}^* = \underset{\lambda}{\mathrm{min}} \quad \mathbb{E}_{|h_w|^2} \left[ \zeta_{w, CDI} \right] .
\end{equation}
Using the relationship of incomplete and complete Gamma functions given by
\begin{equation}\label{eq37}
\Gamma(a) = \Gamma(a,b) + \gamma(a,b),
\end{equation}
we can rewrite $\zeta_{w, CDI}$ of (\ref{eq12}) as
\begin{equation}\label{eq38}
\zeta_{w, CDI} = 1 + \frac{\Gamma\left(N_D,  \frac{N_D \lambda}{\sigma_w^2} \right)}{\Gamma(N_D)} - \frac{\Gamma\left(N_D,  \frac{N_D \lambda}{|h_w|^2 P_D + \sigma_w^2} \right)}{\Gamma(N_D)}.
\end{equation}
Here, we consider a linear approximation of $\zeta_{w, CDI}$ using Taylor series expansion. where the first two terms of the expansion around $P_D=0$ are considered, and these two terms are given by $\left[ f(0) + Pf'(0) \right]$, where $f(P_D)$ is given by (\ref{eq38}) above. We first note that here, $f(0)=1$. To calculate the derivative of $f(P_D)$, we use the derivative property of upper incomplete Gamma function and the required derivative is calculated as

\begin{align}\label{eq40}
 \frac{\partial f(P_D)}{\partial P_D} &= - \Bigg[ -\frac{1}{\Gamma(N_D)} \left( \frac{N_D \lambda}{|h_w|^2 P_D + \sigma_w^2} \right)^{N_D-1} \notag \\
 & \quad \times e^{- \frac{N_D \lambda}{|h_w|^2 P_D + \sigma_w^2}} \left( - \frac{N_D \lambda |h_w|^2}{(|h_w|^2 P_D + \sigma_w^2)^2}  \right) \Bigg] \notag \\
&= - \frac{1}{\Gamma(N_D)} \left( \frac{N_D \lambda |h_w|^2}{(|h_w|^2 P_D + \sigma_w^2)^2}  \right) \notag \\
& \quad \times \left( \frac{N_D \lambda}{|h_w|^2 P_D + \sigma_w^2} \right)^{N_D-1} e^{- \frac{N_D \lambda}{|h_w|^2 P_D + \sigma_w^2}}
\end{align}

which for $P_D=0$ becomes

\begin{align}\label{eq41}
\frac{\partial f(P_D)}{\partial P_D} \Bigg|_{P_D=0} = &- \frac{1}{\Gamma(N_D)} \left( \frac{N_D \lambda |h_w|^2}{(\sigma_w^2)^2}  \right) \notag \\
& \times \left( \frac{N_D \lambda}{\sigma_w^2} \right)^{N_D-1} e^{- \frac{N_D \lambda}{\sigma_w^2}} .
\end{align}

Hence, we have the linear approximation for $\zeta_{w, CDI}$ as
\begin{equation}\label{eq42}
\zeta_{w, CDI} \approx 1 - \frac{P_D}{\Gamma(N_D)} \left( \frac{N_D \lambda |h_w|^2}{(\sigma_w^2)^2}  \right) \left( \frac{N_D \lambda}{\sigma_w^2} \right)^{N_D-1} e^{- \frac{N_D \lambda}{\sigma_w^2}} .
\end{equation}
To find the best threshold for Willie under this approximation, we consider
\begin{equation}\label{eq43}
\lambda_{CDI}^* = \underset{\lambda}{\mathrm{ \arg \min}} \: \mathbb{E}_{|h_w|^2} \left[ \zeta_{w, CDI} \right] ,
\end{equation}
where due to $\mathbb{E}\left[ |h_w|^2 \right] = 1$, we have
\begin{equation}\label{eq44}
\mathbb{E}_{|h_w|^2} \left[ \zeta_{w, CDI} \right] \approx 1 - \left( \frac{N_D \lambda P_D}{(\sigma_w^2)^2 \Gamma(N_D)}  \right) \left( \frac{N_D \lambda}{\sigma_w^2} \right)^{N_D-1} e^{- \frac{N_D \lambda}{\sigma_w^2}}
\end{equation}
Differentiating this quantity w.r.t $\lambda$ gives

\begin{flalign}\label{eq45}
&\frac{\partial \mathbb{E}_{|h_w|^2} \left[ \zeta_{w, CDI} \right] }{\partial \lambda} \notag \\
&= - \frac{N_D^{N_D} P_D}{\Gamma(N_D) \left(\sigma_w^2  \right)^{N_D+1}} \Bigg[N_D \lambda^{N_D-1}e^{- \frac{N_D \lambda}{\sigma_w^2}} \notag \\
& \qquad \qquad \qquad \qquad \qquad - \frac{N_D \lambda^{N_D}}{\sigma_w^2} e^{- \frac{N_D \lambda}{\sigma_w^2}}  \Bigg].
\end{flalign}

Setting the above derivative equal to zero and some further simplifications give
\begin{equation}\label{eq46}
\lambda_{CDI}^* = \sigma_w^2 ,
\end{equation}
Using this value of $\lambda_{CDI}^*$ in the linear approximation of $\zeta_{w, CDI}$ completes the proof.


\end{document}